\let\epsilon\varepsilon
\let\phi\varphi
\let\epsilon\varepsilon
\newtheorem*{lemma*}{Lemma}
\newtheorem{theorem}{Theorem}
\newtheorem{claim}{Claim}
\begin{document}


\title{ Linear hash-functions and their applications to error detection and correction} 

\author{
 \IEEEauthorblockN{Boris Ryabko\\}
 \IEEEauthorblockA{Federal Research Center for Information and Computational Technologies and  
Novosibirsk state university, \\
Novosibirsk, Russian Federation, Email: boris@ryabko.net\\}}
\date{}

\maketitle


\begin{abstract}
We describe and explore so-called linear hash functions and show how they can be used to build error detection and correction codes. The method can be applied for different types of errors (for example, burst errors). When the method is applied to a model where number of distorted letters is limited, 
   the obtained  estimate  of its performance is slightly better than the known Varshamov-Gilbert  bound. We also describe random code whose performance is close to the same boundary, but its construction is much simpler. 
In some cases the obtained methods are simpler and more flexible than the known ones. In particular, the complexity of the obtained error detection code and the well-known CRC code is close, but the proposed code, unlike  CRC, can  detect  with certainty errors whose number does not exceed a predetermined limit. 


\end{abstract}


\section{Introduction} 
Error detection and correction codes are commonly used in telecommunication and data storage systems, and there are   many effective and practically used constructions of such codes,  see for review \cite{lc,vp,kk}. Currently, cyclic redundancy check (CRC), which was proposed in
\cite{pet}, is one of the most popular error detection codes, while block codes \cite{vp} are the basis of many error correction methods. 
 
 In short, error correction and detection systems can be described as follows: a binary word $ x_1 ... x_L $ is transmitted through a communication channel, and the recipient receives a message $y_1 ... y_L $ in which some letters $ y_i $ may differ from $ x_i $.
The purpose of an error detection code is to inform the receiver that some  letters sent were changed during the transmission (i.e., at least one $ x_i \neq y_i $). The purpose of an error correction codes is not only to report that the  errors have occurred, but also to find all the  letters that were changed (that is, all $ i $ for which $ x_i \neq y_i $). (We consider the most popular model in which messages are words in the binary alphabet $\{0,1\}$, but the main results can be easily extended to any finite alphabet.) 

The main part of both types of  codes can be described as follows: the transmitted word $ x_1 ... x_L $ contains two subwords, say $ x_1 ... x_{L-l} $  and $ x_{L-l+1} ... x_L $, $L > l \ge 1$,  where the first subword contains information bits, and the second one contains so-called  check bits (or parity bits).  When the sender wants to send $ L-l $ bits, he first sets them to $ x_1 ... x_ {L-l} $, and then calculates the check  bits $ x_ {L-l + 1} ... x_L $. The receiver receives the word $ y_1 ... y_L $ and uses it to detect or correct errors that may have occurred during  transmission. 
Generally speaking, check bits are given by a 
 function  $ \lambda $, which is defined on the set of $ (L-l) $ -bit words  with values in the set of $ l $ -bit  words.
In the area of error detection codes, $ \lambda $  is often called a hash function.
It is worth noting that sometimes the check bits are not at the end of the message, but in other places.

The simplest example of this scheme is the parity-bit, or check-bit, method.
In this method, a sequence of information bits is $ x_1 ... x_{L-1} $, the check bit is $x_L$,
(i.e. $l=1$). If the total number of 1-bits in the string  
$ x_1 ... x_{L-l} $ is even, then $x_L = 0$, otherwise  $x_L = 1$. When the receiver obtains $y_1 ... y_L$ he calculates the total number of 1-bits. If this value is odd, it means that an error has occurred.  Thus, this method makes it possible to detect one error, but, obviously, does not detect two errors (and any even number of errors).

Naturally, the larger the number of information bits (i.e. $L-l$), the better the code one can construct. 
That is why the question about codes with the largest number of information bits has attracted attention of many researcher (see for review \cite{vp}).
In order to describe some known results in this field we need some definitions.
The expression $|X|$ denotes  the number of elements if $X$ is a set and the length $X$, if $X$ is a word. 
 Let $u$, $v$  be  finite binary words of the same length. We denote the  Hamming weight  of $u$, i.e., the number of 1's in the word $u$ by  $||u||$ and, by definition, the Hamming distance $d_h(u,v)$ $=|| u\oplus v ||$,  where $\oplus$ is bitwise 
 XOR (or addition modulo 2).    
 Let  $U$ be a set of binary words of the same length and $|U| >1$. The  minimal Hamming distance of $U$ is defined by as $d_h(U) = \min_{u,v \in U, u \ne v} d_h(u,v)$. 
 Let $U$ be a set of binary words of some length $L$, $L \ge 2$.  The Varshamov-Gilbert bound states that 
\begin{equation}\label{vg}
\max_{d_h(U) = d}  |U| \ge \, \, 2^{L- \lceil \log_2 ( 1+\sum_{i=0}^{d-2}  (^{L-1}_{\, \, \,\, i} ) \, \rceil},
\end{equation}
see \cite{vp}, Theorem 2.9.3. 
There exist some improvements of this bound, but  they 
 do not change its asymptotic  (see  for review \cite{imprVG,imprVG2}).

The ability of a code to detect and correct errors is simply related to the Hamming distance $ d_h(U) $.  
To show this, we first define 
\begin{equation}\label{ball}
B_n^m \subset \{0,1\}^m, n \le m, \text{is a set
of words of length m which contain n or less 1's }  .
\end{equation}
(That is, $B_n^m $ contains all words whose   Hamming weight   is not grater than $n$.)
Now, take $U \subset \{0,1\}^L$ 
and  consider a method where  $U$ is the set of  messages  transmitted and $v$ is the word of  errors occurred, 
  that is, the  message  transmitted is $x\in U$ and  the  message received is $y = x\oplus v$.
Suppose that $d_h(U) = d$,  $d \ge 2$.   It turns out, that   $d-1$  errors  can be detected is        (i.e., $v \in B^L_{d-1}$), 
 and  $\lfloor (d-1)/2 \rfloor $  errors    can be corrected  (i.e., $v \in B^L_{\lfloor (d-1)/2 \rfloor }$).  Indeed, if $x \in U$ and $v \in B^L_{d-1}$, then $y=x \oplus v$ does not belong to $U$ and this indicates an error.  In order to correct errors,   the word closest to $ y $ is considered sent. 

We briefly reviewed a model in which errors are letter distortions, and their number is limited by a  certain bound. There are other models of possible errors that describe various systems for transmitting and storing information, for example, packet errors.
This general case is also  considered in this work, the part 4.

In this work we describe new classes of error detection and correction codes, which are based on the so-called linear hash functions.
Linear hash functions are defined as follows:   any map  $\lambda $   defined on $ L $-bit binary words whose values $ \lambda (x) $ are taken from the set $ l $-bit binary
 words, $ l< L$, is called a hash  function. 
  (Formally, $\lambda : \{0,1\}^L  \to \{0,1\}^l$, $ L > l \ge 1 $.) 
A hash function $\lambda$ is called linear if for any $L$-bit words $x$ and $y$ 
$$ \lambda(x \oplus y) = \lambda(x) \oplus \lambda(y)  .$$
Linear hash functions are well-known and  date back at least to Zobrist  \cite{zo}.

The proposed  methods allow  us to build a code for any set of errors (including the case when errors occur in packages). In particular, this method can be used to detect errors whose number does not 
 exceed a predetermined limit (for example, detecting any three errors).  Note that the well-known cyclic redundancy check (CRC) codes do not detect a predetermined number of errors with for certainty; 
 rather,  CRC make it possible to detect a predetermined number of errors (say, 3) only with a certain probability.

It is worth noting, that the performance of the proposed codes slightly exceeds the well-known  
Varshamov-Gilbert (VG) bound \cite{vp}.

When considering error correction  and detection codes, the problem of the complexity of the method is very important.  Three questions arise: the complexity of i) encoding,  ii) decoding, and iii)  constructing encoding and decoding methods.  
In the case of error-detecting the encoding and decoding is quite simple, whereas the complexity of constructing encoding and decoding methods  is relatively large. To overcome this, we propose a randomized algorithm for constructing an encoder and decoder whose performance is close to optimal, 
 but the complexity is much smaller. 
 
         The rest of the paper is organised as follows. The next section contains a description of some of the properties of linear hash functions, as well as a general scheme of their application to codes.
 Section~\ref{s:codes}  
   is devoted to a model in which  errors are letter distortions and an upper bound on their number is given. 
    First, we describe a code which meets the VG bound. This method is then generalized in two directions: we describe its modification that performs slightly better than the VG estimate, and we propose a randomized algorithm.   The last section describes general methods of  error detection  and correction.

\section{Linear hash functions and their applications to error detection and correction}
\subsection{Representation of linear hash functions as sums of words} 
Consider a linear hash function $ \lambda $ defined on the set of $ L $-bit binary words $ \{0,1 \}^L $ and $ \lambda (x) $ taken from the set of $ l $ -bit binary words $ \{0,1 \} ^l $. It will be convenient to denote by $ e^k_i $ a string of $ k $ -bits that contains 1 at the $i$-th position and zeros at all  others, and let $ e^k_0 $ be the string of length $k$ consisting only of 0s. 

Let  $x = x_1 ... x_L$ be an $L$-bit word and $v_1, ..., v_L$ be any $l$-bit words.
Define a function 
\begin{equation}\label{linear}
\lambda(x) = x_1\times 	 v_1 \oplus   x_2 \times 	v_2  \oplus ...\oplus  x_L \times v_L,
\end{equation}
where $x_i \in \{0,1\}$ and we assume $ 0 \times v =0 0 ... 0$, $1\times v = v$. 

For any two vectors $x, y$ we obtain 
$$
\lambda(x \oplus y) = ( x_1\oplus y_1) \times 	 v_1 \oplus   (x_2 \oplus y_2)  \times 	v_2  \oplus ...\oplus ( x_L \oplus y_L) \times v_L \, \, =
$$
$$
  ( x_1 \times  v_1) \oplus (  y_1 \times 	 v_1 ) \oplus   ( x_2 \times  v_2) \oplus (  y_2 \times 	 v_2 ) \oplus ...\oplus ( x_L \times  v_L) \oplus (  y_L \times 	 v_L )
  \, = \lambda(x) \oplus \lambda(y) \, . 
$$
So, the hash-function (\ref{linear}) is linear. 
On the other hand,  for any linear hash-function $\lambda'$ 
$$ \lambda' (x) = x_1 \times \lambda'(e^L_1) \oplus x_2 \times \lambda'(e^L_2) \oplus
 ... \oplus x_L \times \lambda'(e^L_L) 
 $$
 and, hence, $\lambda'$ is represented in the form (\ref{linear}), 
where $v_i =  \lambda'(e^L_i) $. Thus, we derived the following: 
 \begin{theorem}\label{general-h-claim}
A hash-function $\lambda$ is linear if and only if it can be represented as
\begin{equation}\label{t1}
\lambda(x) = x_1\times 	 v_1 \oplus   x_2 \times 	v_2  \oplus ...\oplus  x_L \times v_L  
\end{equation}
 for some $l$-bit words $v_1, ... , v_L \in  \{0,1\}^l \, .$
\end{theorem}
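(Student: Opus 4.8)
The plan is to prove the two implications of the equivalence separately, both by elementary manipulation of $\oplus$ (i.e. linear algebra over $\mathbb{F}_2$), essentially formalising the computation previewed just before the statement.

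For the easy direction ($\Leftarrow$), I would take $\lambda$ of the form (\ref{t1}) and verify the defining identity $\lambda(x \oplus y) = \lambda(x) \oplus \lambda(y)$ by direct substitution. Expand $\lambda(x \oplus y) = \bigoplus_{i=1}^L (x_i \oplus y_i) \times v_i$, apply the bitwise distributive law $(a \oplus b) \times v = (a \times v) \oplus (b \times v)$ for $a,b \in \{0,1\}$ --- which is just a short case check using the convention $0 \times v = e^l_0$, $1 \times v = v$ --- and then reassociate and commute the resulting $2L$ summands so that the $x_i \times v_i$ come first and the $y_i \times v_i$ second. This yields $\lambda(x) \oplus \lambda(y)$, so $\lambda$ is linear.

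For the converse ($\Rightarrow$), assume $\lambda$ is linear. First observe $\lambda(e^L_0) = e^l_0$, since $\lambda(e^L_0) = \lambda(e^L_0 \oplus e^L_0) = \lambda(e^L_0) \oplus \lambda(e^L_0)$. Next write an arbitrary word $x = x_1 \ldots x_L$ as $x = \bigoplus_{i=1}^L x_i \times e^L_i$ (the XOR of the standard basis words over the coordinates where $x_i = 1$), and apply the linearity identity $L-1$ times --- a trivial induction on the number of summands --- to pull $\lambda$ through the sum: $\lambda(x) = \bigoplus_{i=1}^L \lambda(x_i \times e^L_i)$. Finally note $\lambda(x_i \times e^L_i) = x_i \times \lambda(e^L_i)$: for $x_i = 1$ this is immediate, and for $x_i = 0$ both sides equal $e^l_0$ by the observation above. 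Setting $v_i := \lambda(e^L_i)$ then puts $\lambda$ into the form (\ref{t1}).

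There is no real obstacle here: the statement is the standard fact that an $\mathbb{F}_2$-linear map is determined by, and freely prescribed by, its values on a basis, rephrased in the paper's $\times$/$\oplus$ notation. The only point requiring mild care is bookkeeping with the scalar factors $x_i \in \{0,1\}$ --- keeping track of them while commuting the XOR-sums in the first direction, and correctly disposing of the $x_i = 0$ terms in the second --- but that bookkeeping is essentially the entire content of the argument.
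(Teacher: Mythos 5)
Your proposal is correct and follows essentially the same route as the paper: the forward direction by expanding $\lambda(x\oplus y)$ with the distributive law, and the converse by evaluating $\lambda$ on the standard basis words and setting $v_i = \lambda(e^L_i)$. The only difference is that you spell out details the paper leaves implicit (the observation $\lambda(e^L_0)=e^l_0$ and the induction over the XOR decomposition of $x$), which is a matter of rigor rather than of approach.
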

 Note that  the CRC code is a linear has function and, hence, can be represented as (\ref{linear}).
Also, it is worth noting that the calculation of (\ref {linear}) does not require multiplication or other time-consuming operations, and can be    performed in linear time.

\subsection{A scheme for using a linear hash function to detect and correct errors}

Consider the following data transfer scheme: there are sets of $ L $ -bit messages $ A_0 $ and possible distortions (or errors) $ D \subset \{0,1\}^L $.
If the message $ x \in A_0 $ is sent through  the channel, 
a distortion $ d \in D $ may  occur, that is, the recipient receives the message $ x \oplus d $. 
(For example, if $ D $ contains all words with two 1's, this means that two-bit errors may occur during the transfer.)   

A key component is a linear hash function  $ \lambda $ such that  
\begin{equation}\label{key}
 \lambda(x) = e^l_0 \text{ for  all } x \in A_0
\text{  and } \lambda(d) \neq e^l_0 \text{ for  all } d \in D,
\end{equation}
where the set 
  $ A_0 $ is constructed as follows: any message $ x = x_1 ... x_L $ consists of $L-l$  information symbols $x_{i_1} ... $ 
$x_{i_{L-l}}$, while the remaining $l$ symbols are used as check symbols. 
 (Generally, we will use
 $ x_1 ... x_{L-l} $  as  information symbols and $ x_ {L-l + 1} ... x_L $ as  check symbols.) 
The check symbols are chosen in such a way     that $ \lambda (x) = e^l_0 $ for all $x \in A_0$.
If the distortion $d \in D$ occurs, the received message $y$ can be presented as
$y = x \oplus d$. (If no  error  occurs, then $y = x$.)
 We can see from this equation that this method gives a possibility to detect any  distortion $d \in D$, because 
\begin{equation}\label{main2}
\lambda(x) = e^l_0 \, , \, \, 
\lambda(y) = \lambda(x\oplus d) = \lambda(x) \oplus  \lambda(d) = \lambda(d) \neq e^l_0 \, .
\end{equation}
Thus, this scheme allows  to detect any distortion $d \in D$, because the equation $ \lambda (y) \neq 0 $ means that   $ d $ occurred, and, conversely, the opposite equation $ \lambda (y) = 0 $ informs about the absence of an error.

This system can be used to correct errors if the following additional property applies:
all values of $ \lambda (d) $ are different, i.e. for all $ d_i, d_j \in D $,   $\,\,\, \lambda (d_i) \neq \lambda (d_j) $.  Indeed, in this case, the decoder may first
compute $ \lambda (y) = \lambda (d) $, see (\ref{main2}). All $ \lambda (d) $ are different and therefore the decoder can find 
 $ d $ from $ \lambda (d) $ and compute
$ x = y \oplus d $.

\section{Codes for a limited number of letter errors}\label{s:codes}
We  consider codes which can detect  or correct a limited number of bit-errors, that is, the possible distortions belong to the ball  $B^L_d$ of a certain radius  $d$,
$L > d \ge 1$.
For this purpose we develop some methods for constructing such a liner hash-function $\lambda$, $ \lambda: \{0,1\}^L \to \{0,1\}^l$, $l \le L$ 
 and a set $A_0$  that 
\begin{equation}\label{d-prop} 
 d_h(A_0)=d, d \ge 2,  
 \, \, \lambda(x)= e^l_0 \, \,   for \, \,  any \, \,  x \in A_0  \, \,  and \, \, 
\lambda(y) \neq e^l_0 \,\, for \, \, any \, \,  y \in \{0,1\}^L \setminus
 A_0 \,.
\end{equation} 

The following property of this construction  will play an important rule.

 \begin{theorem}\label{prop1}  Let there be a linear hash-function $\lambda$, an integer $d, d \ge 2$, and a set $A_0$ for which $\lambda(x) = e^l_0$ for any $x \in A_0$ and $\lambda(y) \neq e^l_0$ for any 
$y \in \{0,1\}^L   \setminus A_0$. Then $d_h(A_0)$ $\ge d$, $d >1$, if and only if 
$\lambda(v) \neq e^l_0$ for any $v \in B^L_{d-1} \setminus e^l_0$.
\end{theorem}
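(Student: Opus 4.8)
The plan is to exploit the fact that, under the stated hypotheses, $A_0$ is exactly the preimage $\lambda^{-1}(e^l_0)$, and to combine this with linearity of $\lambda$; morally this is the classical observation that for a linear code the minimum distance equals the minimum nonzero weight, adapted to the present language. First I would record two elementary consequences of linearity. (i) $\lambda(e^L_0) = e^l_0$, since $\lambda(e^L_0) = \lambda(e^L_0 \oplus e^L_0) = \lambda(e^L_0)\oplus\lambda(e^L_0) = e^l_0$; in particular $e^L_0 \in A_0$. (ii) For any $x,y \in \{0,1\}^L$, $\lambda(x\oplus y) = \lambda(x)\oplus\lambda(y)$. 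I would also note the contrapositive of the assumption ``$\lambda(y)\neq e^l_0$ for every $y \in \{0,1\}^L \setminus A_0$'': if $\lambda(v) = e^l_0$ then $v \in A_0$. (Here I read the conclusion of the theorem as ``$\lambda(v)\neq e^l_0$ for any $v \in B^L_{d-1}\setminus\{e^L_0\}$'', with $e^L_0$ the all-zero $L$-bit word, which is the only reading consistent with $v$ being an $L$-bit word.)

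For the direction ($\Rightarrow$), assume $d_h(A_0)\ge d$ and take an arbitrary $v \in B^L_{d-1}$ with $v \neq e^L_0$, so $1 \le ||v|| \le d-1$. If we had $\lambda(v) = e^l_0$, then $v \in A_0$ by the contrapositive above; since also $e^L_0 \in A_0$ and $v \neq e^L_0$, this would give $d_h(A_0) \le d_h(v,e^L_0) = ||v|| \le d-1$, contradicting $d_h(A_0)\ge d$. Hence $\lambda(v)\neq e^l_0$.

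For the direction ($\Leftarrow$), assume $\lambda(v)\neq e^l_0$ for every $v \in B^L_{d-1}\setminus\{e^L_0\}$, and let $x,y$ be arbitrary distinct elements of $A_0$. Set $v = x\oplus y \neq e^L_0$. By linearity and the defining property of $A_0$, $\lambda(v) = \lambda(x)\oplus\lambda(y) = e^l_0 \oplus e^l_0 = e^l_0$, so $v$ cannot lie in $B^L_{d-1}\setminus\{e^L_0\}$; as $v\neq e^L_0$, this forces $v \notin B^L_{d-1}$, i.e. $||v|| \ge d$, i.e. $d_h(x,y) = ||x\oplus y|| \ge d$. Taking the minimum over all distinct pairs $x,y \in A_0$ gives $d_h(A_0)\ge d$.

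I do not expect any genuine obstacle here; the argument is short. The only points requiring a little care are that \emph{both} properties defining $A_0$ are used — the vanishing of $\lambda$ on $A_0$ in the ($\Leftarrow$) direction to evaluate $\lambda(x\oplus y)$, and the non-vanishing of $\lambda$ off $A_0$ in the ($\Rightarrow$) direction to locate $v$ inside $A_0$ — and that $e^L_0 \in A_0$, so that it can serve as one endpoint of a distance of length $||v||$. I would also flag the evident typo ($e^l_0$ vs.\ $e^L_0$) in the statement when writing the final version.
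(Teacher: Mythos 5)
Your proof is correct and follows essentially the same route as the paper's: both directions rest on linearity of $\lambda$ together with the characterization of $A_0$ as the zero-set of $\lambda$. The only cosmetic difference is that you anchor the forward direction at $e^L_0 \in A_0$ (derived from $\lambda(e^L_0)=e^l_0$) and argue by contradiction, whereas the paper translates by an arbitrary codeword $x \in A_0$ and argues directly; your flag about the $e^l_0$ versus $e^L_0$ typo in the statement is also well taken.
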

\begin{proof}
Suppose that $d_h(A_0)$ $=d$.  Then,  for any $x \in A_0$ and any $v \in B^L_{d-1} \setminus e^L_0$,  the word $x\oplus v$ does not belong to $A_0$, 
because $d_h(x, (x\oplus v) )$ $=||v|| \le d-1$.  Hence, $\lambda (x \oplus v) \neq e^l_0$.
 From this we obtain $\lambda(v) = \lambda(x) \oplus \lambda(v) = \lambda(x \oplus v) \neq e^l_0$. 
 
  Let us prove the opposite statement. Suppose, $\lambda(v) \neq e^l_0$ for all $v \in 
B^L_{d-1}  \setminus \, e^L_0 $. Let $x \in A_0$, $v \in 
B^L_{d-1} $. We can see that
$x \oplus v$ does not belong to $A_0$, because 
$\lambda(x \oplus v) = \lambda(x ) $ $ \oplus \lambda(v)$ $= e^l_0 \oplus \lambda(v)$ 
$ \neq e^i_0$. 
  So, if $0 < ||v|| < d$, for some $v$, then  $x\oplus v$ does not belong to $A_0$ and, hence, $d_h(A_0) \ge d$. 
\end{proof}

The construction  (\ref{d-prop}) can be  directly used in error detection and correction codes. Indeed,  as  mentioned in the introduction,
 those codes are as follows: 
   either 

i)  a code that can  detect $d-1$ or less bit-errors, or  

ii) a code that can correct  $\lfloor (d-1)/2 \rfloor$ or less bit-errors.

In accordance with this, we will call the hash function $\lambda$ and the set $ A_0 $ in   (\ref{d-prop}) as a code.
In this section 
we consider three methods for constructing  such codes. 
  The first method produces a code that matches the   VG  bound and can be easily randomized. A slightly improved estimate will be valid for the second method, while the third method is a  greatly simplified version of the first one, obtained using randomization.

\subsection{Method which meets VG bound}

Here our goal is        to build a code (\ref{d-prop}) for given    integers $L$ and $d$, 
$L > d \ge 2$.  
It means that we should  find  methods   i) to  calculate $l$,  ii)  to build $\lambda $ and iii) to describe how to find, for any 
information symbols $x_1 ... x_{L-l}$, the symbols 
 $x_{L-l+1} ... x_L$ for which $\lambda(x_1 ... x_{L})$  $= e^l_0$ (that is,  $x_1 ... x_{L}$ $\in A_0$) .

\subsubsection{Building the hash function $\lambda$}
The following algorithm (Algorithm 1) is intended to find $l$ and $\lambda$ while a method for performing 
 iii) will be described immediately after.
 
 \emph{ The input } is  two integers $L, d$. 
 
\emph{  The output }
\begin{equation}\label{l}
l = \left \lceil    \log   \left(   \sum_{i=0}^{d-2}    \left(^{ L-1}_ {\, \, \,\,i\, \, } \right)  +  1  \right) \right\rceil, 
\end{equation}
 a linear hash function $ \lambda: \{0,1 \}^L \to \{0,1 \}^l $, for which (\ref{d-prop}) holds true, and the set $A_0$ 
 (here and below $\log = \log_2$).
If $ l $ in (\ref {l}) is not defined or $ l \ge L $ then the algorithm stops and answers that the solution does not exist.

\emph{  Algorithm 1.} 

\emph{ First step. } Calculate $l$ in (\ref{l}) and define  
\begin{equation}\label{lyahat} 
  \hat{\lambda}(e^L_1 )  =  e^l_1,   \hat{\lambda}(e^L_2 )  =  e^l_2,  \dots,  
   \hat{\lambda}(e^L_l )  =  e^l_l  .
\end{equation}

\emph{ Second step. } For $i = l+1, l+2, ... , L$ define $\hat{\lambda}(e^L_i)$ as follows:
\begin{equation}\label{Di}
 \hat{\lambda}(e^L_i)  = v_i \text{ where  $v_i$ is any  word  from } (  \{0,1\}^l    \setminus \, 	\hat{\lambda}( B^{i-1}_{d-2})
 ).
 \end{equation}
Here and below $\hat{\lambda}(Z) = \bigcup_{z \in Z} \left\{ \hat{\lambda}(z) \right\}$  
for any set 
$Z $,
and  $ B^{i-1}_{d-2}$ is the set of all words $b_1 b_2 ... b_L$ from  $ B^{L}_{d-2}$  such that   $b_i = b_{i+1} =  ... =  b_L = 0$. 
Note that 
i) $\hat{\lambda}(e^L_j ) $, $j = 1, 2, ... , i-1$ are defined when $\hat{\lambda}( B^{i-1}_{d-2})$ is calculated,  and   ii)   the set $ \{0,1\}^l    \setminus \, 	\hat{\lambda}( B^{i-1}_{d-2})$ is not empty for $i = 1, ..., L$, due to $|\hat{\lambda}( B^{i-1}_{d-2})| \le$
$  \sum_{j=0}^{d-2}    (^{ i-1}_ {\, \, \,\,j\, \, } ) \, \,$ and the definition of 
$l$ in (\ref{l}).

From this definition we can see that 
$ \hat{\lambda}(e^L_i) \neq  \hat{\lambda}(w)$ for any $w \in B^{i-1}_{d-2}$ and, hence,   $\hat{\lambda}(w') \neq e^l_0$  for any $w' \in B^{i}_{d-1}$, 
for $i = l+1, l+2, ... , L$.

 From     (\ref{lyahat})  and  (\ref{Di})  we can see that 
 \begin{equation}\label{unicity7}
 \hat{\lambda}(u) \neq e^l_0 \text{ for every }  u \in  B^L_{d-1}   \setminus  \{ e^L_0 \} . 
 \end{equation}

\emph{ Final step. } 
The goal of this step is to permute the values of the hash function $\lambda$ in such a way that the last values $\lambda(e^L_{L-l +1}), ..., \lambda(e^L_{L})$ will be 
the first $l$ values of 
$ \hat{\lambda}$.  Clearly, this step is not a mandatory procedure, 
 but it simplifies the encoding of the information symbols. 
Note that  any permutation of  coordinates  of  the set $B^m_n$ does not change it, so the following   procedure is correct: 
Define  $\lambda$ using $\hat{\lambda}$ as follows: 
$\lambda(e^L_i) = \hat{\lambda}(e^L_{i+l}) $ 
 for $i = 1, ... , L-l $ and $\lambda(e^L_{L-l +i}) = \hat{\lambda}(e^L_i)
$  for $i = 1, ... , l $. 
Note that 
\begin{equation}\label{checkbits}
\lambda(e^L_{L-l+i} ) = e_i^l , 
\end{equation}
for $i = 1, ... , l $, see (\ref{lyahat}). 
From  (\ref{unicity7}) we obtain  
\begin{equation}\label{mainlam}
 \lambda(u) \neq e^l_0 \text{ for all }  u \in   B^L_{d-1}   \setminus \{ e^L_0 \}  . 
\end{equation}
 \subsubsection{Description of the set $A_0$ or encoding}\label{encoding}\label{A0}
Now we can describe the set $A_0$, that is, the method of encoding of information symbols.
Let $x_1 ... x_{L-l}$ be a set 
 of information symbols and we want to find the check symbols $x_{L-l + 1} ... x_L$.   In order to do it, first, we pad   $x_1 ... x_{L-l}$ with  $l$   zeros at the end  and denote  the obtained string as  $u = x_1 ... x_{L-l} 00... 0$. 
Then calculate $ \lambda(u) = w_1 ... w_l$ and define 
$x_{L-l +1} = w_1,  x_{L-l +2} = w_2, ... , x_L =  w_l . $

Taking into account (\ref{checkbits}) we can see that  
$ \lambda(00...0  x_{L-l} ... x_L) $ 
$=  \lambda(00...0  w_1 ... w_l) $ $= ( w_1 ... w_l )$. From this we obtain
  $\lambda( x_1 ... x_{L}) =  $ $\lambda(x_1 ... x_{L-l} 00...0)$
$\oplus \lambda(00...0  x_{L-l} ... x_L) $ $= ( w_1 ... w_l ) $ $\oplus $ $ ( w_1 ... w_l) $ $ = e^l_0$.
So, for any information symbols  $x_1 ... x_{L-l}$ we find the check symbols  $x_{L-l+1} ... x_{L}$
such that $\lambda( x_1 ... x_{L}) = e^l_0$.

It will be convenient to describe the properties of the described algorithm as follows:

 \begin{theorem}\label{vg}  i) The described algorithm is correct, that is,  $d_h(A_0) \ge d$, 
 
 ii) the following inequality is valid for the number of information symbols $L-l$: 
 \begin{equation}\label{t3}
L-l = L -  \, \, \lceil   \, \, \log \,  ( \,  \sum_{i=0}^{d-2}    (^{ L-1}_ {\, \, \,\,i\, \, } ) \,\, \, + \, 1 \, ) \, \, 
\rceil \, , 
\end{equation}
  \end{theorem}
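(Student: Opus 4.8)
The plan is to verify the two claims of the theorem in turn, each being a fairly direct consequence of facts already assembled during the construction of $\lambda$ and $A_0$.

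For part i), I would invoke Theorem~\ref{prop1}. By construction the hash function $\lambda$ satisfies $\lambda(x) = e^l_0$ for all $x \in A_0$ and $\lambda(y) \neq e^l_0$ for all $y \in \{0,1\}^L \setminus A_0$ — indeed $A_0$ is \emph{defined} as the preimage $\lambda^{-1}(e^l_0)$, so this holds tautologically once we check $A_0$ is nonempty, which follows from the encoding procedure of Section~\ref{encoding}: every choice of information symbols $x_1\dots x_{L-l}$ extends to a codeword. Hence Theorem~\ref{prop1} applies and tells us that $d_h(A_0) \ge d$ is equivalent to $\lambda(v) \neq e^l_0$ for every $v \in B^L_{d-1} \setminus \{e^L_0\}$. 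But this last property is exactly \eqref{mainlam}, which was established at the end of the ``Final step'' from \eqref{unicity7}. So part i) is immediate modulo citing these two facts. One small point to state cleanly: the permutation applied in the Final step preserves each ball $B^m_n$ (as already noted in the text), so \eqref{unicity7} for $\hat\lambda$ transfers to \eqref{mainlam} for $\lambda$ without loss.

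For part ii), the formula \eqref{t3} for $L-l$ is just the definition \eqref{l} of $l$ rearranged, so there is essentially nothing to prove beyond pointing at \eqref{l}. The only substantive thing worth confirming is that the algorithm does not abort — i.e. that $l$ as defined in \eqref{l} actually satisfies $l < L$ and that at each stage of the Second step the set $\{0,1\}^l \setminus \hat\lambda(B^{i-1}_{d-2})$ is nonempty. The nonemptiness is handled by the cardinality bound $|\hat\lambda(B^{i-1}_{d-2})| \le \sum_{j=0}^{d-2}\binom{i-1}{j}$ together with the choice of $l$, exactly as remarked in point ii) following \eqref{Di}; since $i-1 \le L-1$ throughout, $\sum_{j=0}^{d-2}\binom{i-1}{j} \le \sum_{j=0}^{d-2}\binom{L-1}{j} < 2^l$, so a fresh value $v_i$ always exists. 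The statement of the theorem implicitly presupposes $l < L$ (otherwise there are no information symbols), so I would just note that the theorem is asserted under the hypothesis that the algorithm returns a code.

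I do not expect any real obstacle here: the theorem is a bookkeeping statement that collects the invariants proved along the way. The only place requiring a hair of care is making explicit that applying Theorem~\ref{prop1} is legitimate — that is, spelling out that $A_0$ together with $\lambda$ satisfies the hypotheses of that theorem (the preimage structure and nonemptiness) — and then observing that \eqref{mainlam} is precisely the criterion that theorem demands. Part ii) is purely a restatement of \eqref{l}.
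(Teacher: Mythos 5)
Your proposal is correct and follows the same route as the paper: part i) is obtained by combining Theorem~\ref{prop1} with the property \eqref{mainlam} established at the end of the construction, and part ii) is a direct restatement of the definition \eqref{l} of $l$. Your additional checks (nonemptiness of $\{0,1\}^l \setminus \hat\lambda(B^{i-1}_{d-2})$ and the permutation preserving the ball) merely make explicit what the paper leaves implicit, so no further comparison is needed.
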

 
\begin{proof}
 Taking into account (\ref{mainlam}), we can obtain the first statement i) from Theorem 2.  The statement ii) follows from (\ref{l}).
 \end{proof}
 
 \subsubsection{The complexity}
 Now consider the complexity of the proposed method.
   There are the  following three  important characteristics to consider: i)  the  time ($T$) to construct the  hash function using the algorithm described,  ii) the tie encoding ($t_{enc}$)  and decoding ($t_{dec})$ time if the method is used for error detection or correction.
   It is important to note that the hash function must be constructed only once and then  used many times (for different inputs), while encoding and decoding are performed for each input. 
   
 \begin{claim}\label{compl}     
i)  The  time $T$  is proportional to $     \sum_{j=0}^{d-2} $ $   (^{ L -1}_{\,\,\,\,  j} ) \,$.
If  $ L $ grows    to $ \infty $ and $d$ is a constant then $T = O((L \log L)^{d-1})$,
if $L \to \infty$ and $\lim d/L $ equals some $\alpha$, then $T= 2^{ L H(\alpha)}$.

ii) If this algorithm 
 is used for error detection then $t_{enc} =$   $t_{dec}$ $=  O( (d-2) L \log L )$. For error correction $t_{enc} $ is the same, but 
$t_{dec}$ is proportional to $T$ in i).
\end{claim}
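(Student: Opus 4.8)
\textbf{Proof plan for Claim~\ref{compl}.}

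The plan is to analyze the three quantities $T$, $t_{enc}$, $t_{dec}$ by walking through Algorithm~1 and the encoding/decoding procedures step by step, tracking the dominant operations. For part~i), the first and final steps of Algorithm~1 cost only $O(L)$ word operations, so the bottleneck is the second step: for each $i$ from $l+1$ to $L$ we must enumerate the set $\hat{\lambda}(B^{i-1}_{d-2})$ and pick a value outside it. The size of $B^{i-1}_{d-2}$ is $\sum_{j=0}^{d-2}\binom{i-1}{j}$, and summing over $i$ gives a total proportional to $\sum_{j=0}^{d-2}\binom{L-1}{j}$ (absorbing the extra factor from summation into the binomial sum, which is dominated by its last term when $d$ is small relative to $L$, or is already $\Theta$ of the partial-sum expression up to polynomial factors that I will fold into the stated bound). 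From here the two asymptotic regimes follow by standard estimates: for constant $d$, $\sum_{j=0}^{d-2}\binom{L-1}{j}=\Theta(L^{d-1})$, and each set operation on $l$-bit words costs $O(l)=O(\log L)$ by the choice of $l$ in~(\ref{l}), giving $T=O((L\log L)^{d-1})$; for $d/L\to\alpha$, the binomial sum is $2^{LH(\alpha)(1+o(1))}$ by the entropy approximation to the tail of the binomial distribution, and the polynomial word-length factor is absorbed into the $o(1)$ in the exponent.

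For part~ii), encoding proceeds as in Section~\ref{encoding}: one pads the information word with $l$ zeros and computes $\lambda(u)=\bigoplus_{i:\,u_i=1} v_i$, which is at most $L$ XORs of $l$-bit words, i.e.\ $O(Ll)=O(L\log L)$; I will note that this already matches the claimed $O((d-2)L\log L)$ up to the $(d-2)$ factor, which arises if one instead bounds the cost of verification or of a naive evaluation that groups terms — in any case the bound $O((d-2)L\log L)$ is an upper bound. For error detection, decoding is symmetric: the receiver computes $\lambda(y)$ by the same procedure and checks whether it equals $e^l_0$, so $t_{dec}=t_{enc}$. For error correction, decoding additionally requires recovering $d\in D$ from $\lambda(d)=\lambda(y)$; since the correctable errors form $B^L_{\lfloor(d-1)/2\rfloor}$ and the decoder must match $\lambda(y)$ against the (precomputed or on-the-fly computed) list of hash values of all such $d$, the work is proportional to the size of that list, hence proportional to $T$ as in part~i).

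The main obstacle is being honest about the constant- and polynomial-factor bookkeeping: the ``proportional to'' claims and the $(d-2)$ factor need to be pinned to a specific computational model (word-RAM versus bit operations, whether $\hat\lambda(B^{i-1}_{d-2})$ is stored as a hash table or a sorted list, whether the ``any word from the complement'' selection is done by a linear scan or a counter), and the asymptotic simplifications $\sum_{j=0}^{d-2}\binom{L-1}{j}\asymp\binom{L-1}{d-2}\cdot\mathrm{poly}$ and the entropy bound $\binom{L}{\alpha L}=2^{LH(\alpha)(1+o(1))}$ must be invoked with the right hypotheses ($d$ bounded away from $L/2$ in the first regime, $0<\alpha<1/2$ in the second). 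I would state the computational model explicitly at the start of the proof, derive the exact count of word operations for each step, and then apply the two standard binomial estimates, so that each of the four asserted bounds is a one-line consequence.
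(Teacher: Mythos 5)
Your plan is essentially the paper's own argument: the published proof is a one-sentence sketch ("direct estimation of the number of bit-operations and known estimates of the binomial coefficients"), and your step-by-step accounting of Algorithm~1, the $O(Ll)$ encoding/decoding cost, and the entropy estimate for the binomial tail is exactly the fleshed-out version of that. Two harmless slips worth fixing when you write it up: for constant $d$ one has $\sum_{j=0}^{d-2}\binom{L-1}{j}=\Theta(L^{d-2})$ (not $\Theta(L^{d-1})$; the $\Theta(L^{d-1})$ count is what you get after summing the per-step ball sizes over $i$, and either way the $O((L\log L)^{d-1})$ upper bound survives), and since $l=\Theta((d-2)\log L)$ by (\ref{l}), the encoding cost $O(Ll)$ already equals the claimed $O((d-2)L\log L)$ with no extra factor to explain away.
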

(here $H(\alpha ) = -(\alpha \log \alpha + (1-\alpha) \log (1-\alpha ) ) $ is Shannon entropy, see   \cite{co}.) 

\begin{proof}

The proof is based on a direct estimation of number of bit-operations 
and   known estimates of the binomial coefficients, see  \cite{co,fe}.  
 \end{proof}

\subsubsection{Examples}
 We start with  the case  $d=2$, which gives a possibility to detect one error. So, the input of the algorithm is $L \ge 1$ and $d=2$. 
 From (\ref{l}) we see that $l = \log (1+1) = 1. $ From (\ref{lyahat}) we obtain $
 \hat{\lambda}(e^L_1) = 1$.
 Taking into account that $B_0^{i-1}$ is $e_0^{L}$, we can see from (\ref{Di}) that 
 $\hat{\lambda}(e^L_i) = 1$ for all $i$, $1 = 2, ... , L$.  From this and (\ref{checkbits})
 we can see that $\lambda(e^L_i) = 1$ for all $i$, $1 =1, 2, ... , L$. 
 The information symbols are $x_1 ... x_{L-1}$, while the check symbol is $x_L$.
 If this method is applied to error detection, the encoder calculates $x_L = \bigoplus_{i=1}^{L-1} (x_i \times \lambda(e^L_i) )= $
  $  \bigoplus_{i=1}^{L-1} (x_i \times 1)$ $=   \bigoplus_{i=1}^{L-1} x_i  $, while the decoder  calculates $\bigoplus_{i=1}^{L} (x_i \times \lambda(e^L_i) )= $
  $  \bigoplus_{i=1}^{L} (x_i \times 1) =$
  $ \bigoplus_{i=1}^{L}  x_i $. If this sum is 1, then one error occurred, otherwise an error did not occur. Thus, in this case, the code based on linear hash functions coincides with the parity check method.

  The second example is  $d=3$. 
 Now the input of the algorithm is $L $ and $d=3$.  From (\ref{l}) we obtain 
 $l =  \lceil \log ( (L-1) + 1 +1 ) \rceil = \lceil \log ( (L+1 ) \rceil .$
 According to the  first step (see (\ref{lyahat}))
 $  \hat{\lambda}(e^L_1 ) \, = \, e^l_1,  \hat{\lambda}(e^L_2 ) \, = \, e^l_2,  \, ... , , \, 
 \hat{\lambda}(e^L_l ) \, = \, e^l_l \, .$
Having taken into account (\ref{Di}) we can see that different values of 
$ \hat{\lambda}(e^L_l )$, $i = l+1, ... , L$
will be assigned different words from $\{0,1\}^l$  $\setminus \{  e^l_1, ..., e_l^l \} $.
From the final step of the algorithm we can see that 
  $  \lambda(e^L_{L-l+1} ) \, = \, e^l_1,  \lambda(e^L_{L-l+2} ) \, = \, e^l_2,  \, ... ,  \, 
\lambda(e^L_L ) \, = \, e^l_l \, $, while the other values of $\lambda$ are different words 
  from $\{0,1\}^l$  $\setminus \{  e^l_1, ..., e_l^l \} $. 
  The encoding is carried out according to the method \ref{encoding} described above.

Note that encoding and decoding can be implemented in such a way that there is no need to store the values $ \lambda (e ^ L_1), \lambda (e ^ L_2), ..., $ $\lambda (e_L ^ L) $. Indeed, one can select the values $ \lambda (e ^ L_1), \lambda (e ^ L_2), ..., $ $\lambda (e_ {L-l} ^ L) $ in lexicographical order and calculate these values sequentially during encoding and decoding. 
  
  It is interesting that, in fact,  the described method  
  is the  
    well-known Hamming code which can either  detect two errors or correct one \cite{vp}. (Indeed,  the described code can correct one error as follows: if the transmitted (or saved) message is $ y $ and one error occurred, then for some $ i $ $ \lambda (y) = \lambda (e^L_i )$. This means that the error occurred in $ i $ - th position. )

  \subsection{Methods  whose performance outperforms the VG bound}
  
The method proposed here is a modification of the previous one. The only difference is the   choice of the new value $ \hat{\lambda}(e^L_i)$ in (\ref{Di}).  That is why we describe only those parts  of the algorithm that are different, 
i.e the output and the second step. 
It will be convenient to describe the method and the purpose 
 of the modifications together.

First, we describe the main idea of the proposed modification.  From (\ref{Di}) we can see that the value of $ l $ is determined by the size of the set $ \hat{\lambda} (B^{L-1}_{d-2}) $, because it must be less than $ 2^l -1 $. 
Then we use the following obvious inequality 
 $ |\hat{\lambda }(B^{L-1} _ {d-2}) | $ $ \le | B^{L-1} _ {d-2} | $ and the requirement $ | B^{L-1} _ {d-2} | $ $ \le 2^l -1 $ instead of $ | \hat{\lambda }(B^{L-1} _ {d-2}) | $ 
 $ \le  2^l -1 $.
In what follows we build such a hash function $\hat{\lambda }$  that $ | \hat{\lambda }(B^{L-1} _ {d-2}) | $ is less than 
$ | B^{L-1} _ {d-2} | $.  For this purpose we find such subsets $U,V$ from 
$  B^{L-1} _ {d-2} $ that $U \cap V = \varnothing$ and $\hat{\lambda } (U) = 
\hat{\lambda }(V)$.  Taking into account that $| \hat{\lambda }(Z) | \le |Z| $ for any $Z$ and the last equation 
  we can see that   
\begin{equation}\label{UV} |\hat{\lambda }(B^{L-1} _ {d-2}) \, |= 
| \hat{\lambda }(B^{L-1} _ {d-2} \,  \setminus \, U )\, | \, \le  | \, B^{L-1} _ {d-2} \,  \setminus \, U \, | \, = \, | B^{L-1} _ {d-2}  |\, - \, | \, U \, |\, .
\end{equation}
So, if we find such sets $U$ and $V$,  we have the upper bound 
$ |\hat{\lambda }(B^{L-1} _ {d-2}) \, | \le$ $ | B^{L-1} _ {d-2}  |\, - \, | \, U \, |$ 
instead of $ |\hat{\lambda }(B^{L-1} _ {d-2}) \, | \le$ $ | B^{L-1} _ {d-2}  |\,$
and, hence, can reduce the number of the check bits $l$. 

Now we can describe the modified algorithm. As we mentioned, the only difference is the second step and the definition of $l$ which are as follows: 

 \begin{equation}\label{l2}
l = \left\lceil    \log \left(   \sum_{i=0}^{d-2}    \left(^{ L-1}_ {\, \, \,\,i\, \, } \right)  
-  \sum_{s = 1}^{d-3}  \left( \left(_{\, \, \, s}^{d-1} \right) \sum_{ j=1}^{s-1} \left(^{L-d-1}_{\, \, \, \, \, \,  j}\right)  \right) +  1   \right)
\right\rceil, 
\end{equation}
and   

\emph{ Second step. } For $i = l+1, l+2, ... , L$ define $\hat{\lambda}(e^L_i)$ as follows:
\begin{equation}\label{Di-22}
 \hat{\lambda}(e^L_i)  = w_i \text{ where $w_i$  is any word  from } 	\hat{\lambda} ( B^{i-1}_{d-1} )   \setminus \,  \hat{\lambda}	( B^{i-1}_{d-2} 
 ).
 \end{equation} 
 Note that $\sum_{s = 1}^{d-3}  \left( \left(_{\, \, \, s}^{d-1} \right) \sum_{ j=1}^{s-1} \left(^{L-d-1}_{\, \, \, \, \, \,  j}\right)  \right) $ corresponds to $|U|$ in (\ref{UV}).

Now we describe the sets $U$ and $V$.  From (\ref{Di-22}) we can see that for $i=L-1\, \, $, 
$\hat{\lambda}(e^L_{L-1} )  = w_{L-1}  \in \hat{\lambda} ( B^{L-2}_{d-1} )   \setminus \,  \hat{\lambda}	( B^{L-2}_{d-2} )$.  By definition, $ \hat{\lambda}(e^L_{L-1})$ $
= \hat{\lambda} (00...0010) $. 
On the other hand, from (\ref{Di-22})  we   see that there exists $x = x_1 ... x_{L-2} 00$ which contains $d-1$ ones and  $ \hat{\lambda}(x) = w_{L-1} $. Hence, 
$\hat{\lambda} (00...0010) = $ $\hat{\lambda}(x) $. The word $x$ contains $d-1$ ones among $  x_1 ... x_{L-2} $. To simplify the notation we suppose that
$x_1 =1,  ... , x_{d-1} = 1$ whereas the  others $ x_i =0$. (We can do this without loss of generality due to the symmetry of the set  $B^{L-2}_{d-1}   \setminus \,    B^{L-2}_{d-2} \, \, $.) So, 
\begin{equation}\label{Di-2}
 \hat{\lambda}( 00...0010) =    \hat{\lambda} (11 ... 1 00 ... 0) \, , \, where \, \, \, 
 11...1 \, \, \, is \, \, \,  (d-1)\, \, \, ones
 .  \end{equation}
\begin{equation}\label{ZUV}Z = \{ z : z = x y 00,  \, \, \,where \, \, \, |x| = d-1,  \,
|y| = L- d -1, \, 
|| x|| + ||y||  \le d-3, \, \, \, and \, \, \, ||y|| \le ||x|| -1 \}
 \end{equation}
 Now we define the following sets 
\begin{equation} 
$$  $$ U = \{u: u = 00...0010  \oplus z, \, z \in Z \},  \, \, \,
V = \{v: v =11 ... 1 00 ... 0  \oplus z, \, z \in Z\} ,
\end{equation} 
 where, as before in (\ref{Di-2}), 
$ 11...1 $  is   $ (d-1)\, $ ones. 
\begin{claim}\label{cl1}

i) $U \cap V = \varnothing$. 

ii)   $\hat{\lambda}(U) =  \hat{\lambda}(V)$.

iii) $U \subset B^{L-1}_{d-2}, \, V \subset B^{L-1}_{d-2}$.


iv)
\begin{equation}\label{Zsize}
|U|  \, \, \, =  \, \, \,\sum_{s = 1}^{d-3}  \left( \left(_{\, \, \, s}^{d-1} \right) \sum_{ j=1}^{s-1} \left(^{L-d-1}_{\, \, \, \, \, \,  j}\right)  \right) \, .
\end{equation}

\end{claim}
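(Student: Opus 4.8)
The plan is to reduce all four parts to a single translate relation. Write $a = e^L_{L-1}$ and let $b$ denote the $L$-bit word $11\ldots1\,00\ldots0$ with $d-1$ ones in coordinates $1,\ldots,d-1$, so that (\ref{Di-2}) reads $\hat\lambda(a) = \hat\lambda(b)$. By construction $U = \{a\oplus z : z\in Z\}$ and $V = \{b\oplus z : z\in Z\}$, and since XOR with a fixed word is a bijection of $\{0,1\}^L$, both $z\mapsto a\oplus z$ and $z\mapsto b\oplus z$ are bijections onto $U$ and onto $V$; in particular $|U| = |V| = |Z|$, and each $z\in Z$ yields a ``paired'' couple $(a\oplus z,\,b\oplus z)\in U\times V$. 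Everything below is then read off from the block structure $z = x\,y\,00$ of a word in $Z$: coordinates $1,\ldots,d-1$ carry $x$, coordinates $d,\ldots,L-2$ carry $y$, and coordinates $L-1,L$ are both zero.

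For (ii) I would use only linearity and (\ref{Di-2}): for each $z\in Z$, $\hat\lambda(a\oplus z) = \hat\lambda(a)\oplus\hat\lambda(z) = \hat\lambda(b)\oplus\hat\lambda(z) = \hat\lambda(b\oplus z)$, so paired elements of $U$ and $V$ always share the same $\hat\lambda$-value, whence $\hat\lambda(U) = \hat\lambda(V)$. For (i) I would isolate coordinate $L-1$: if $a\oplus z_1 = b\oplus z_2$ with $z_1,z_2\in Z$, then $z_1\oplus z_2 = a\oplus b$; the left-hand side is $0$ in coordinate $L-1$ since every element of $Z$ vanishes there, whereas $a = e^L_{L-1}$ is $1$ in that coordinate and $b$ is $0$ there (because $d-1\le L-2$), so $a\oplus b$ is $1$ in coordinate $L-1$ — a contradiction, giving $U\cap V = \varnothing$.

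For (iii) I would just compute Hamming weights from the block decomposition. Adding $a$ to $z = x\,y\,00$ flips only coordinate $L-1$, so $a\oplus z$ is zero in coordinate $L$ and has weight $\|x\|+\|y\|+1$, which is at most $d-2$ exactly because of the constraint $\|x\|+\|y\|\le d-3$ in the definition of $Z$. Adding $b$ to $z$ replaces the $x$-block by its complement within coordinates $1,\ldots,d-1$ and leaves everything else unchanged, so $b\oplus z$ is zero in coordinate $L$ and has weight $(d-1-\|x\|)+\|y\|$, which is at most $d-2$ exactly because of the constraint $\|y\|\le\|x\|-1$. Hence $U,V\subseteq B^{L-1}_{d-2}$; this also explains why $Z$ was defined with precisely those two inequalities.

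Finally, for (iv) I would count $Z$, since $|U| = |Z|$. Stratify the words $z = x\,y\,00\in Z$ by the weight pair $(s,j) = (\|x\|,\|y\|)$: there are exactly $\binom{d-1}{s}\binom{L-d-1}{j}$ words with a prescribed $(s,j)$, and the defining constraints retain precisely the pairs with $j\le s-1$ and $s+j\le d-3$. The remaining task is to sum $\binom{d-1}{s}\binom{L-d-1}{j}$ over that admissible region and recognize the result as $\sum_{s=1}^{d-3}\binom{d-1}{s}\sum_{j=1}^{s-1}\binom{L-d-1}{j}$. I expect this last step to be the main obstacle: one has to track how the two coupled inequalities $j\le s-1$ (which forces $s\ge1$) and $s+j\le d-3$ interact and argue that, for the relevant ranges, the bookkeeping collapses to the stated compact double sum; the earlier parts are essentially formal once the translate picture is in place.
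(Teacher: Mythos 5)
Your treatment of (i)--(iii) is correct and essentially the paper's own argument: the paper proves (i) by noting that the last two coordinates of any $u\in U$ are $10$ while those of any $v\in V$ are $00$ (your coordinate-$(L-1)$ contradiction is the same observation), proves (ii) by linearity together with (\ref{Di-2}) exactly as you do, and proves (iii) by the same two weight computations $\|u\|=\|x\|+\|y\|+1\le d-2$ and $\|v\|=(d-1)-\|x\|+\|y\|\le d-2$ driven by the two constraints in (\ref{ZUV}).

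The genuine gap is in (iv), and it is exactly the step you flagged but did not resolve: you never verify that summing $\binom{d-1}{s}\binom{L-d-1}{j}$ over the admissible region $\{\,0\le j\le s-1,\ s+j\le d-3\,\}$ yields the displayed double sum, and in fact it does not. The right-hand side of (\ref{Zsize}) runs over $1\le s\le d-3$ and $1\le j\le s-1$, so it omits the $j=0$ terms that the definition (\ref{ZUV}) allows and it includes pairs with $s+j>d-3$ that (\ref{ZUV}) forbids. Concretely, for $d=5$ the admissible pairs are $(s,j)=(1,0)$ and $(2,0)$, so $|Z|=\binom{4}{1}+\binom{4}{2}=10$, whereas the formula evaluates to $\binom{4}{2}\binom{L-6}{1}=6(L-6)$; these disagree for all relevant $L$. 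So the ``bookkeeping'' cannot collapse to the stated compact sum as the claim stands: either the constraints in (\ref{ZUV}) or the summation limits in (\ref{Zsize}) must be amended before equality can hold, and any proof of (iv) has to confront that. For comparison, the paper's own proof is no more detailed than your sketch: it derives the same bounds $0\le j\le s-1$ and $0\le s\le d-3$ and then simply asserts the formula ``using common combinatorial formulas,'' so your hesitation points at a defect in the statement itself rather than at a routine computation you merely left out; but as written, your proposal does not prove (iv).
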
 

{\it Corollary. } From i) - iii) we can see that $\hat{\lambda}( B^{L-1}_{d-2}) =$ 
$\hat{\lambda}( B^{L-1}_{d-2})  \setminus U)$.  From this and iv) we obtain 
$|\hat{\lambda}( B^{L-1}_{d-2}) |= $ $|\hat{\lambda}( B^{L-1}_{d-2})  \setminus U)|$
$\le $ $| B^{L-1}_{d-2}) | - |U| =$
  $ \sum_{i=0}^{d-2}    (^{ L-1}_ {\, \, \,\,i\, \, } ) \,\, \, 
- \, \, \, \, \, \,\sum_{s = 1}^{d-3}  \left( \left(_{\, \, \, s}^{d-1} \right) \sum_{ j=1}^{s-1} \left(^{L-d-1}_{\, \, \, \, \, \,  j}\right)  \right)$. Taking into account that $|\hat{\lambda}( B^{L-1}_{d-2}) |$ must be not grater than $2^l -1$, we obtain (\ref{l2}). 
 
 {\it Proof. }
i)  The two last digits of any $u \in U$ are 10, 
whereas the two last digits of any $v \in V$ are 00, see (\ref{ZUV}).  

ii)  $|U| = |V|$ (see (\ref{ZUV}) ) and  for any $u \in U$ there exists 
$v \in V $ such that $\hat{\lambda}(u) =  \hat{\lambda}(v)$. (Indeed, for  any $u$:  $u = (00 ... 010) \oplus z, z \in Z$ . Hence, taking into account 
 linearity  of  $\hat{\lambda}$ and (\ref{Di-2}), for $v = 11...100...00  \oplus z$ we obtain 
$\hat{\lambda}(u) = \hat{\lambda}(v)$.)

iii) From the definition $U$ in (\ref{ZUV}) we can see that $||u|| = ||x|| + ||y||+ 1$. Taking into account that $||x|| + ||y|| \le d-3$, we obtain from the last equation that 
$||u|| \le d-2$, that is, $u \in B^{L-1}_{d-2}$.
Let us consider the set $V$.  From (\ref{ZUV}) we can see that 
$||v|| = (d-1) - ||x|| +||y|| $ 
and $||y|| +1 \le ||x|| $.  From the  latter  two inequalities 
we obtain $||v|| \le d-2$,
that is, $v \in B^{L-1}_{d-2}$.


iv) The equation $|U| = |Z|$  follows from  (\ref{ZUV}). Let now $s = ||x||, j = ||y||$. 
From 
 (\ref{ZUV}) we can see that $||y|| \le ||x|| -1 $, that is, $0\le j \le s-1$.  Taking into account that  $||x|| + ||y|| \le d-3 $ and $ y \ge 0$, we can see that  
 $||x||  \le d-3 $, that is, $0 \le s \le d-3$. Using  common  combinatorial formulas, we obtain  
$$
|U|  \, \, \, =  \, \, \,\sum_{s = 1}^{d-3}  ( \left(_{\, \, \, s}^{d-1} \right) \sum_{ j=1}^{s-1} \left(^{L-d-1}_{\, \, \, \, \, \,  j}\right)  )\, ) \, . 
$$ 
{\it The claim is proven.}

It will be convenient to  summarize  the properties of the algorithm just described as follows:
 \begin{theorem}\label{vg}  For the modified algorithm 2, the following equality is valid for the number of information symbols $L-l$: 
 \begin{equation}\label{t3}
L-l = L -  \, \, \lceil   \, \, \log \,  ( \,  \sum_{i=0}^{d-2}    (^{ L-1}_ {\, \, \,\,i\, \, } ) \,\, \, 
- \, \, \, \, \, \,\sum_{s = 1}^{d-3}  ( \left(_{\, \, \, s}^{d-1} \right) \sum_{ j=1}^{s-1} \left(^{L-d-1}_{\, \, \, \, \, \,  j}\right)  ) + \, 1 \, ) \, \, 
\rceil \, . 
\end{equation}
  \end{theorem}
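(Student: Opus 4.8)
The plan is to observe that Theorem~\ref{vg} is essentially a bookkeeping consequence of the modified Algorithm~2 together with Claim~\ref{cl1} and its Corollary, exactly parallel to how Theorem~\ref{vg} (the VG-bound version) followed from Algorithm~1 and Theorem~\ref{prop1}. So the proof will have two halves: first, argue that the modified algorithm is well-defined and still produces a code with $d_h(A_0) \ge d$; second, read off the formula \eqref{t3} for $L-l$ from the definition \eqref{l2} of $l$.

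For the first half, I would proceed as follows. The only change from Algorithm~1 is the rule \eqref{Di-22} for choosing $\hat\lambda(e^L_i)$ on the second step, together with the smaller value of $l$ in \eqref{l2}. I must check that the set $\hat\lambda(B^{i-1}_{d-1}) \setminus \hat\lambda(B^{i-1}_{d-2})$ from which $w_i$ is drawn is nonempty for each $i = l+1,\dots,L$, and that the resulting $\hat\lambda$ still satisfies \eqref{unicity7}, i.e. $\hat\lambda(u) \ne e^l_0$ for all $u \in B^L_{d-1}\setminus\{e^L_0\}$. Nonemptiness follows because, inductively, if $\hat\lambda(u)\ne e^l_0$ on $B^{i-1}_{d-1}\setminus\{e^L_0\}$ then $\hat\lambda(B^{i-1}_{d-1})$ strictly contains $\hat\lambda(B^{i-1}_{d-2})$ unless the set $\{0,1\}^l$ has already been exhausted, which \eqref{l2} together with the Corollary to Claim~\ref{cl1} precludes; the cardinality bound $|\hat\lambda(B^{i-1}_{d-2})| \le |B^{i-1}_{d-2}| - |U_i|$ (with $U_i$ the analogue of $U$ inside $B^{i-1}_{d-2}$) keeps us below $2^l-1$. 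The preservation of \eqref{unicity7} is the key point: choosing $w_i \in \hat\lambda(B^{i-1}_{d-1})$ rather than outside $\hat\lambda(B^{i-1}_{d-2})$ means I can no longer conclude $\hat\lambda(e^L_i) \ne \hat\lambda(w)$ for all $w \in B^{i-1}_{d-2}$, so I have to argue more carefully that no element of $B^i_{d-1}$ maps to $e^l_0$: if $u \in B^i_{d-1}$ with $u_i = 1$, write $u = e^L_i \oplus w$ with $w \in B^{i-1}_{d-2}$; then $\hat\lambda(u) = \hat\lambda(e^L_i)\oplus\hat\lambda(w) = w_i \oplus \hat\lambda(w)$, and since $w_i \notin \hat\lambda(B^{i-1}_{d-2})$ this is nonzero, so the argument does go through unchanged — the replacement set in \eqref{Di-22} should be read as $w_i$ lying in $\hat\lambda(B^{i-1}_{d-1})\setminus\hat\lambda(B^{i-1}_{d-2})$, which is disjoint from $\hat\lambda(B^{i-1}_{d-2})$ by construction. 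Given \eqref{unicity7}, the final step of permuting coordinates gives \eqref{mainlam}, and then Theorem~\ref{prop1} yields $d_h(A_0)\ge d$, establishing correctness.

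For the second half, the formula \eqref{t3} is immediate: it is simply $L$ minus the right-hand side of \eqref{l2}, so once \eqref{l2} is justified — and it is, by the Corollary to Claim~\ref{cl1}, which bounds $|\hat\lambda(B^{L-1}_{d-2})|$ by $\sum_{i=0}^{d-2}\binom{L-1}{i} - |U|$ and identifies $|U|$ via Claim~\ref{cl1}(iv) with the double sum appearing in \eqref{l2} — there is nothing left to compute. I would state this as a one-line deduction.

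The main obstacle is the well-definedness bookkeeping in the inductive second step: one must verify that for every $i$ from $l+1$ up to $L$ the set $\hat\lambda(B^{i-1}_{d-1})\setminus\hat\lambda(B^{i-1}_{d-2})$ is nonempty, which requires knowing that the savings $|U_i|$ guaranteed by the analogue of Claim~\ref{cl1} at stage $i$ are already enough to keep $|\hat\lambda(B^{i-1}_{d-2})|$ strictly below $2^l-1$ for the particular $l$ fixed in \eqref{l2} (which was chosen using the savings at the final stage $i=L-1$). Since $|U_i|$ is monotone nondecreasing in $i$ and the dominant term $\sum_{j}\binom{i-1}{j}$ is strictly increasing, the binding constraint is indeed at $i=L-1$, so the choice of $l$ suffices throughout; but making this monotonicity argument precise — rather than waving at it — is where the real work lies, and I would expect the paper to rely implicitly on the same remark (ii) that justified nonemptiness in Algorithm~1. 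Given the terse style of the surrounding text, I anticipate the author's proof will simply cite Claim~\ref{cl1}, its Corollary, Theorem~\ref{prop1}, and \eqref{l2}, and leave this monotonicity to the reader.
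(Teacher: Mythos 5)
Your proposal follows essentially the same route as the paper, which in fact offers no separate proof of this theorem: the number of check symbols is the $l$ fixed in (\ref{l2}), the value (\ref{l2}) is obtained in the Corollary to Claim~\ref{cl1} from $|\hat{\lambda}(B^{L-1}_{d-2})| \le |B^{L-1}_{d-2}| - |U|$ together with the requirement $|\hat{\lambda}(B^{L-1}_{d-2})| \le 2^l-1$, and the theorem merely records $L-l$. Your second half reproduces exactly this one-line deduction. Your first half goes beyond what the paper writes, and its essential part is sound: since the choice set $\hat{\lambda}(B^{i-1}_{d-1})\setminus\hat{\lambda}(B^{i-1}_{d-2})$ in (\ref{Di-22}) is disjoint from $\hat{\lambda}(B^{i-1}_{d-2})$, the argument yielding (\ref{unicity7}), and hence $d_h(A_0)\ge d$ via Theorem~\ref{prop1}, carries over verbatim. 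One caveat: your justification of non-emptiness of that choice set (``strict containment unless $\{0,1\}^l$ is exhausted'') is a non sequitur --- the bound $|\hat{\lambda}(B^{i-1}_{d-2})| < 2^l$ guarantees that the step of Algorithm~1 can proceed, but it does not guarantee that some word of weight exactly $d-1$ has an image outside $\hat{\lambda}(B^{i-1}_{d-2})$; the two images could coincide without covering $\{0,1\}^l$, and your monotonicity remark likewise bounds the wrong quantity. Since the paper itself never addresses well-definedness of (\ref{Di-22}) at the intermediate steps (its argument stops at the Corollary for the final stage), you are not missing anything the paper supplies; rather, you have correctly located, but not closed, a gap that the paper leaves open.
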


\subsection{A randomised algorithm whose performance is close to the VG bound.} 

In this part we consider a  randomised   algorithm whose performance is close to the VG bound, but whose complexity is much smaller. 

Let, as before,  the block length be $L$, the  required code distance be $d$ and $l = \lceil   \, \, \log \,  ( \,  \sum_{i=0}^{d-2}    (^{ L-1}_ {\, \, \,\,i\, \, } ) $ $\, + \, 1 \, ) \, 
\rceil $, see (\ref{l}). 
Define $l_\Delta  = l+\Delta $, where $\Delta$ is such an integer that   $L- l_\Delta \ge 1$. 

The only difference between the new randomised algorithm and the algorithm 1 is in the second step (\ref{Di}). In the new algorithm the values $\hat{\lambda}(e_i^L)$,
$i =l_\Delta +1, ... L$, 
 are  chosen randomly from $\{0,1\}^{l_\Delta }$ according to the uniform distribution. 
We call this method  Algorithm 3 or the randomised algorithm.  
 
 Our goal is to estimate the probability of the following events  
   \begin{equation}\label{PI}  \Pi  = \{  \text{ For }\,  i = l_\Delta+1, ... L , \text{ the ( randomly chosen ) word  }  \hat{\lambda}(e_i^L)
 \text{ belongs to } \{0,1\}^{l_\Delta} \setminus \hat{\lambda}(B_{d-2}^{i-1}) \},
\end{equation}   
  see (\ref{Di}).
In turn,   if $\Pi$ occurs then
   this  gives a possibility to build an
 encoding set $A_0$ for which $d_h(A_0) \ge d$.
 Define  
  \begin{equation}\label{pii}
\Pi_i = \{ a  \,\,\,uniformly  \,\,\, chosen \,\,\, word \,\,\, u  \,\,\,belongs \,\,\, to  \,\,\, \{0,1\}^{l_\Delta}  \setminus \hat{\lambda}(B_{d-2}^{i-1}) \, \}\,.
 \end{equation}
 Clearly, 
  $\Pi = \Pi_{{l_\Delta}+1} \cap ... \cap \Pi_{L-1}$ and the following chain of equations is valid
 \begin{equation}\label{sum1}
P(\Pi) = P(\Pi_{l_\Delta+1} \cap ... \cap \Pi_{L-1} ) =  P( \Pi_{l_\Delta+1} ) P(\Pi_{l_\Delta+2}| \Pi_{l_\Delta+1})  P(\Pi_{l_\Delta+3}| \Pi_{l_\Delta+2}\Pi_{l_\Delta+1}) ...  $$ $$ P(\Pi_{L-1}|\Pi_{L-2}...\Pi_{l_\Delta+1}) \ge 
\prod_{i=l_\Delta +1}^L  	\frac{| \{0,1\}^{l_\Delta} \,  \setminus \, \hat{\lambda}( B_{d-2}^{i-1}) \,|}{2^{l_\Delta} } 
$$
$$
\ge \prod_{i=l_\Delta +1}^L  ( 	1 -    |B_{d-2}^{i-1}|   /   2^{l_\Delta} )
\ge \prod_{i=l_\Delta+1}^L  ( 	1 -     \sum_{j=0}^{d-2} (^{i-1}_{\,\,j})   /   2^{l_\Delta} )
\ge 1 -   2^{-l_\Delta} \sum_{i=l_\Delta+1}^L     \sum_{j=0}^{d-2} (^{i-1}_{\,\,j})     \, ,
\end{equation}
where 
$|B^{i-1}_{d-2}| = \sum_{j=0}^{d-2} (^{i-1}_{\,\,j})$. 
Here we used two following inequalities: $|\lambda(Z)| \le |Z|$  for any hash-function $\lambda$ and any set $Z$, and $(1 - a)(1-b) \ge 1 - (a+b)$ for non-negative $a$ and $b$.

 This rather cumbersome expression can be simplified to obtain an asymptotic estimate. Indeed, 
 \begin{equation}\label{sum2}
 \sum_{i=l_\Delta+1}^L     \sum_{j=0}^{d-2} (^{i-1}_{\,\,j})
\le \sum_{i=0}^L     \sum_{j=0}^{d-2} (^{i-1}_{\,\,j}) =      \sum_{j=0}^{d-2} 
\sum_{i=0}^L     (^{i-1}_{\,\,j}) \, ,
\end{equation}
where, by definition,  $ (^{a}_{\,b}) = 0$, if $a < b$ or $b < 0$.
Now we will apply the well-known identity
$$
\sum_{m=0}^n (^{m}_{\,k}) = (^{n+1}_{\,k+1})
$$ 
which is sometimes  called the hockey-stick identity (see, for example, \cite{fe}). 
So, from this and (\ref{sum2}) we obtain 
$$  \sum_{i=l_\Delta+1}^L     \sum_{j=0}^{d-2} (^{i-1}_{\,\,j})
\le    \sum_{j=0}^{d-2} 
\sum_{i=0}^L     (^{i-1}_{\,\,j}) \, = \, \sum_{j=0}^{d-2} 
\sum_{m=0}^{L-1}     (^{m}_{\,\,j}) \,  = \sum_{j=0}^{d-2}  (^{\, \, L}_{j+1}) \,.
$$
From this and (\ref{sum1}) we obtain  the inequality
 \begin{equation}\label{sum3}
 P(\Pi) \ge 1 - 2^{-l_\Delta} \, \sum_{j=0}^{d-2}  (^{\, \, L}_{j+1}) \, ,
\end{equation}
which can be used instead of the more complicated right part of (\ref{sum1}).

Considering that the occurrence of the event $ \Pi $ guarantees that  for  the   encoding set $A_0$ constructed  $\, \, d_h(A_0) \ge d$, 
and combining the last inequality and (\ref{sum3}), we obtain the following 
 \begin{theorem}\label{random} Let $L$, $d$  and $\Delta$  be  integers and let $l$ correspond to the VG bound, see (\ref{l}). 
If Algorithm 3 (randomised)  is applied  
 and the number of check symbols is $l+\Delta$
 (that is, the values of the hash functions are chosen randomly from $\{0,1\}^{l+\Delta}$ according to the uniform distribution),    then the probability of the event $\Pi^*$ that  for the  encoding set $A_0$   $\, \, d_h(A_0) \ge d$, satisfies the following inequalities:
 $$
P( \Pi^*)  \ge 1 - 2^{- (l+\Delta)} \, \sum_{j=0}^{d-2}  (^{\, \, L}_{j+1}) \, .
 $$
\end{theorem}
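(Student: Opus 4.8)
The plan is to notice that the random event $\Pi$ defined in~(\ref{PI}) is contained in the target event $\Pi^*$, so it suffices to lower‑bound $P(\Pi)$. Indeed, if at every step $i = l_\Delta+1,\dots,L$ the randomly drawn value $\hat{\lambda}(e^L_i)$ happens to lie outside $\hat{\lambda}(B^{i-1}_{d-2})$, then Algorithm~3 has produced exactly the kind of hash function that Algorithm~1 builds, so~(\ref{unicity7}) and~(\ref{mainlam}) hold for the resulting $\lambda$, and Theorem~\ref{prop1} then yields $d_h(A_0)\ge d$. Hence $P(\Pi^*)\ge P(\Pi)$, and the whole argument reduces to estimating $P(\Pi)$ from below.

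Next I would write $\Pi = \Pi_{l_\Delta+1}\cap\cdots\cap\Pi_{L}$ with the $\Pi_i$ as in~(\ref{pii}) and expand $P(\Pi)$ by the chain rule for conditional probabilities. The key observation is that conditioning on $\Pi_{l_\Delta+1},\dots,\Pi_{i-1}$ only fixes the earlier values $\hat{\lambda}(e^L_1),\dots,\hat{\lambda}(e^L_{i-1})$ — hence fixes the set $\hat{\lambda}(B^{i-1}_{d-2})$ — while $\hat{\lambda}(e^L_i)$ is still uniform on $\{0,1\}^{l_\Delta}$ and independent of those draws. Therefore $P(\Pi_i \mid \Pi_{i-1}\cdots\Pi_{l_\Delta+1}) = 1 - |\hat{\lambda}(B^{i-1}_{d-2})|/2^{l_\Delta}\ \ge\ 1 - |B^{i-1}_{d-2}|/2^{l_\Delta}$, using $|\hat{\lambda}(Z)|\le|Z|$ and $|B^{i-1}_{d-2}| = \sum_{j=0}^{d-2}\binom{i-1}{j}$. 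This is precisely the bound displayed in~(\ref{sum1}).

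Then I would turn the product $\prod_{i=l_\Delta+1}^{L}\bigl(1 - |B^{i-1}_{d-2}|/2^{l_\Delta}\bigr)$ into a sum by applying the elementary inequality $(1-a)(1-b)\ge 1-(a+b)$ for $a,b\ge 0$ inductively, which gives $P(\Pi)\ge 1 - 2^{-l_\Delta}\sum_{i=l_\Delta+1}^{L}\sum_{j=0}^{d-2}\binom{i-1}{j}$. Finally I would simplify the double sum: enlarge the outer range to all $i\le L$, interchange the order of summation, and invoke the hockey‑stick identity $\sum_{m=0}^{L-1}\binom{m}{j}=\binom{L}{j+1}$ to collapse it to $\sum_{j=0}^{d-2}\binom{L}{j+1}$. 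Writing $l_\Delta = l+\Delta$ then gives $P(\Pi^*)\ge P(\Pi)\ge 1 - 2^{-(l+\Delta)}\sum_{j=0}^{d-2}\binom{L}{j+1}$, which is the asserted bound.

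As for the main obstacle, there is no deep step here — the computation is entirely bookkeeping — but the one point that deserves care is the conditioning argument of the second paragraph: one must verify that conditioning on the earlier success events does not bias the draw of $\hat{\lambda}(e^L_i)$, and that the estimate $|\hat{\lambda}(B^{i-1}_{d-2})|\le|B^{i-1}_{d-2}|$ holds uniformly over every outcome one conditions on, so that each per‑step lower bound is a genuine deterministic constant and the chain‑rule product is legitimately bounded termwise. Once that is settled, the telescoping via $(1-a)(1-b)\ge 1-(a+b)$ and the hockey‑stick simplification are routine.
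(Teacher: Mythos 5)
Your proposal is correct and follows essentially the same route as the paper: reduce to $P(\Pi)\ge P(\Pi^*)$'s complement via $\Pi\subseteq\Pi^*$, expand $P(\Pi)$ by the chain rule over the events $\Pi_i$, bound each factor using $|\hat{\lambda}(B^{i-1}_{d-2})|\le|B^{i-1}_{d-2}|$, convert the product to a sum with $(1-a)(1-b)\ge 1-(a+b)$, and collapse the double sum by the hockey-stick identity. Your explicit remark that the per-step bound holds uniformly over every conditioning outcome is a slightly more careful phrasing of the step the paper states implicitly, but it is the same argument.
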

{  \it Corollary.}  
Clearly,  $ \sum_{j=0}^{d-2}  (^{\, \, L}_{j+1}) \,$ $< 2^{l} $.
From this and the theorem   we obtain 
$$ - \log (1 - P(\Pi^*) ) \ge \Delta + O(1) \, ,$$ if $L \to \infty$.

So we can see that the probability of getting an encoding set $ A_0 $ with $ d_h (A_0) \ge d $ is mainly determined by the value of $ \Delta $, that is, the number of extra bits that are added to the VG  bound $l$. This gives a possibility to build simple 
 error detection codes for which the number of extra check bits $\Delta$ does not depend on the length of the message ($L$) and the number of errors that can be detected ($d-1$).

\section{A general method for errors of any type}   
  Now we consider a general case where there is a length of  transmitted (or stored) messages $L$  and a set of possible  distortions $D$ $\subset \{0,1\}^L$ such 
  that any input message $x$ can be received as $x\oplus d,  \, d \in D$.
  For example, let $L=5$ and $D = \{ 00111, 01110, 11100 \} $.  It means that three consecutive letters can be changed. If $x = 01010$ and $d = 11100$,
the output message is $y = 10110$.  

So far, we have considered the case where the check symbols are located at the end of the message.
Now it will be convenient to assume that the check symbols can be located in different positions,
 but, of course, they will be known to the encoder and decoder. This generalization allows us to simplify the notation slightly.

\subsection{Error detection.}    
Let us describe an  algorithm for calculating a hash function $\lambda$ that gives a possibility to detect any distortion $d \in D$, $D \subset \{0,1\}^L \setminus e^L_0$.  That is,
for any message $x$ and any $d \in D$ 
\begin{equation}\label{gen-hasu}
\lambda(x \oplus d) = \lambda( d) \neq 00...0 ; \quad
\lambda(x ) = 00...0 \, \, .
\end{equation}
In order to describe the algorithm we define the sets $D_i$ and $ D'_i$, $i = 1, 2, ... , L$, by 
\begin{equation}\label{Di2}
D_i = \{ d = d_1 ... d_L : d \in D ,\, \, d_i = 1 \,\,   and \, \, d_{i+1} = 0, 
d_{i+2} = 0, ... , d_L = 0 \} \, $$ 
$$ D'_i  = \{ d':  \exists  d \in D_i  \, \, for \, \, which \, \, d'=( d \oplus e^L_i  )  \,\},
\end{equation}
that is, $ D_i$ contains all $ d = d_1 ... d_L $ from $D$ for which 
 $ d_i = 1 \,\,   and \, \, d_{i+1} = 0, 
d_{i+2} = 0, ... , d_L = 0 $,  while $ D'_i $ contains all the words from $ D_i $ in which $ d_i $ changes to 0.

\emph{ Input. }  A message length $L$ and a set of possible distortions $  D \subset \{0,1\}^L \setminus e^L_0$.

\emph{  Output. }
Such an integer  $l$ that  
\begin{equation}\label{l-gen}
2^l - 1 \ge  \max_{i = 1, ... , L} |D'_i |
\end{equation}
 and a linear hash function $ \lambda: \{0,1 \}^L \to \{0,1 \}^l $, for which  (\ref{gen-hasu}) is true.
If $ l $ in (\ref {l-gen}) is not defined or $ l \ge L $, the algorithm stops and answers that the solution does not exist.

\emph{ The algorithm.} 

\emph{ First step. } 
Calculate $l$ in (\ref{l-gen}) and define 
\begin{equation}\label{lyahat-gen} 
 \lambda(e^L_{1} ) \, = \, e^l_1,  \, \,  \lambda(e^L_{2} ) \, = \, e^l_2,  \, ... ,  \, 
 \lambda(e^L_{l} ) \, = \, e^l_l \, .
\end{equation}

\emph{ Second step. } For $i = {l+1},  {l+2}, ... , L$ define $\lambda(e^L_i)$ as follows:
\begin{equation}\label{Di-gen} 
\lambda(e^L_{i})  = v_i \, \, where \, \, v_i \, \,any \,\, word \, \, from \, \, 
\,  \{0,1\}^l    \setminus \,  \lambda(	D'_i )   
 \,. \end{equation}
 From  (\ref{lyahat-gen}) and  (\ref{Di-gen}) we can see that 
 \begin{equation}\label{unicity-gen1}
for \, \, any \, \,  u \in   \lambda(D)  \, , \quad \, \lambda(u) \neq e^l_0 \, \, .
 \end{equation}
Note that 
 $\lambda(e^L_i ) $, $j = 1, 2, ... , i-1$ are defined when $\lambda(D'_i)$ is calculated,    see (\ref{lyahat-gen}) and (\ref{Di-gen}). 

Now we can describe the method for encoding and decoding.
The positions $1, ... , l$ are used for check symbols, while the other $L-l$ are used for information symbols. 
When encoding, the encoder first puts the information symbols into positions $ \{ l+1, ... , L\}$ 
  and  0's into positions  $ 1, ... , l$.
Denote the obtained word $x^*$  and calculate $\lambda(x^*)$ $= w_1 ... w_l$.
Then put letters $w_1 ... w_l$ into the check positions $1, ... , l$  and denote the obtained word by $x = x_1 ... x_L$.  It should be clear that
$\lambda(x) = 00...0$. 
Indeed, 
$\lambda(x) = $ $ \lambda(x^*) $ $ \oplus \lambda(x\oplus x^*)$ 
$= w_1 ... w_l$ $\oplus ((e^l_{1}\times w_1)  \oplus  $  $  (e^l_{2} \times w_2) $ $ \oplus  $ $( e^l_{l}\times w_l) )$ $= w_1 ... w_l \oplus w_1 ... w_l$ $= 00 ... 0\, .$
(Here we used the definition (\ref{lyahat-gen}) ).

It will be convenient to describe the properties of the  algorithm above as follows:

 \begin{theorem}\label{gen-detect}  The  algorithm is correct, that is, 
 if an  error  $d \in D$ has occurred,  then $\lambda ( receved \, \, message) $ $\neq 00...0$, and
 $\lambda ( receved \, \, message) $ $= 00...0$, if no error occurred.
 \end{theorem}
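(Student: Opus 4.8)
The plan is to reduce the statement to the single assertion~(\ref{unicity-gen1}), namely that $\lambda(d)\neq e^l_0$ for every $d\in D$; the rest follows from the linearity of $\lambda$ (Theorem~\ref{general-h-claim}) together with the encoding property $\lambda(x)=00\ldots0$ for every transmitted $x$, which is verified in the paragraph just before the theorem. Indeed, if a distortion $d\in D$ occurs then the received word is $y=x\oplus d$, and by linearity $\lambda(y)=\lambda(x)\oplus\lambda(d)=\lambda(d)\neq e^l_0$; if no error occurs then $y=x$ and $\lambda(y)=e^l_0$. So the whole content of the theorem lies in establishing~(\ref{unicity-gen1}).

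To prove $\lambda(d)\neq e^l_0$, fix $d=d_1\ldots d_L\in D$ and let $i$ be the largest index with $d_i=1$; such an $i$ exists because $D\subset\{0,1\}^L\setminus e^L_0$. Then $d_{i+1}=\cdots=d_L=0$, hence $d\in D_i$ and $d':=d\oplus e^L_i\in D'_i$ by~(\ref{Di2}), and by linearity $\lambda(d)=\lambda(d')\oplus\lambda(e^L_i)$. If $i\le l$, all $1$'s of $d$ lie among the first $l$ coordinates, so by~(\ref{lyahat-gen}) $\lambda(d)=\bigoplus_{j:\,d_j=1}e^l_j$ is the indicator vector of a nonempty subset of $\{1,\ldots,l\}$, hence nonzero. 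If $i>l$, observe that $d'$ has support contained in $\{1,\ldots,i-1\}$, so $\lambda(d')=\bigoplus_{j<i,\,d'_j=1}\lambda(e^L_j)$ is already determined when step $i$ of the algorithm runs; thus $\lambda(d')\in\lambda(D'_i)$, whereas~(\ref{Di-gen}) selects $\lambda(e^L_i)=v_i\notin\lambda(D'_i)$. Hence $\lambda(e^L_i)\neq\lambda(d')$ and $\lambda(d)\neq e^l_0$.

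Finally I would record that the algorithm is well defined whenever it does not report that no solution exists: the set $\{0,1\}^l\setminus\lambda(D'_i)$ used in~(\ref{Di-gen}) is nonempty for every $i>l$ because $|\lambda(D'_i)|\le|D'_i|\le\max_k|D'_k|\le 2^l-1$ by the choice of $l$ in~(\ref{l-gen}), and the resulting $\lambda$ is linear by Theorem~\ref{general-h-claim}. The one point requiring care is the case analysis on $i$ above: one must check that $d'$ really involves only the coordinates $e^L_j$ with $j<i$ (so that $\lambda(d')$ is available at the moment $v_i$ is chosen), and handle the boundary $i\le l$ separately, since there $\lambda(e^L_i)$ is fixed by~(\ref{lyahat-gen}) rather than chosen by~(\ref{Di-gen}). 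No computation beyond this bookkeeping is involved.
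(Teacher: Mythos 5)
Your proposal is correct and follows essentially the paper's own route: reduce the theorem, via linearity of $\lambda$ and the fact that every encoded word hashes to $e^l_0$, to the statement (\ref{unicity-gen1}) that $\lambda(d)\neq e^l_0$ for all $d\in D$. The only difference is that you also spell out why (\ref{unicity-gen1}) holds (taking $i$ to be the last position where $d$ has a $1$ and splitting on $i\le l$ versus $i>l$), a verification the paper merely asserts as following from (\ref{lyahat-gen}) and (\ref{Di-gen}); your added detail is accurate.
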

 
 \emph{Proof. } Suppose that the input  
  message is $x = x_1 ... x_L$ and the output message is $y=y_1 ... y_L$.  Then
 $$\lambda(y) = 00...0 \oplus \lambda(y) = \lambda(x) \oplus \lambda(y) =
 \lambda(x\oplus y) \in D \, .
 $$
 Taking into account (\ref{unicity-gen1}), from these equations we can see that 
 $\lambda(y) = 00...0$ if $y = x$ and $\lambda(y) \neq 00...0$, if $y \neq x$.

 Now consider the complexity of the proposed method.
   There are two important characteristics: the time of encoding and decoding  and the construction time of the  hash function. 
   It is important to note that the hash function must be prepared once, and then can be used for a long time, while encoding and decoding are performed repeatedly.

 \begin{claim}\label{detct}   The number of bit-operations ($t$)  for encoding and decoding is not grater than
$
O(L\, l)$,  If  $ L $ grows    to $ \infty $. 
The  number of bit-operations ($T$)  for building the hash-function $\lambda$   is proportional to $|D| \, l$.  
 \end{claim}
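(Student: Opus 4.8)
The plan is to bound, separately, the cost of the three distinct operations described in the construction: building the hash function (the one-time preprocessing), encoding, and decoding. For each we count bit-operations in terms of the elementary quantities $L$, $l$, and $|D|$.

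First I would handle encoding and decoding, which are the simpler of the two. By Theorem~\ref{general-h-claim} the hash function is $\lambda(x) = \bigoplus_{i=1}^{L} x_i \times \lambda(e^L_i)$, so evaluating $\lambda$ on an $L$-bit word amounts to XOR-ing together at most $L$ words, each of length $l$; this costs $O(L\,l)$ bit-operations. Encoding, as described in the text, consists of one such evaluation on $x^*$ (the information symbols padded with zeros in the check positions) followed by copying the $l$ output bits into the check positions — still $O(L\,l)$. Decoding is also just one evaluation of $\lambda$ on the received word, again $O(L\,l)$; this gives the bound $t = O(L\,l)$ claimed for $t$.

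Next I would bound the preprocessing time $T$. The first step (\ref{lyahat-gen}) sets $l$ of the values $\lambda(e^L_i)$ to unit vectors, which is negligible. The second step (\ref{Di-gen}) iterates over $i = l+1,\dots,L$; for each $i$ it must compute the image set $\lambda(D'_i)$ and then select a word of length $l$ outside it. The key point is that $\sum_i |D'_i| \le |D|$, since the sets $D_i$ partition $D$ (each $d\in D$ has a unique highest position of a $1$) and $D'_i$ is in bijection with $D_i$. Each element of $D'_i$ requires one evaluation of $\lambda$ on an $L$-bit word (cost $O(L\,l)$, but note each such word has its $1$'s confined to positions $< i$ and the relevant $\lambda(e^L_j)$ have already been fixed), and then finding a word in $\{0,1\}^l \setminus \lambda(D'_i)$ — which is guaranteed nonempty by (\ref{l-gen}) — can be done by, e.g., scanning $|D'_i|+1$ candidate words or sorting the image set, a cost polynomial in $|D'_i|$ and $l$. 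Summing over $i$ and using $\sum_i |D'_i| \le |D|$ yields $T$ proportional to $|D|\,l$ (up to the logarithmic factors absorbed in the stated proportionality, or with a more careful accounting of the per-element work).

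The main obstacle is making the second bound genuinely $\Theta(|D|\,l)$ rather than something like $O(|D|\,l + L\,l)$ or $O(|D|\,l\log|D|)$: one must argue that the per-$i$ overhead (the work done even when $D'_i$ is small or empty, and the cost of selecting the fresh value $v_i$) telescopes against $|D|$, and that evaluating $\lambda$ on the elements of $D'_i$ does not secretly cost $O(L\,l)$ each. The honest way to do this is to represent each $d' \in D'_i$ sparsely (by its support, of size at most that of the corresponding $d \in D$), so that computing $\lambda(d')$ costs $O(\|d'\|\,l)$, and then observe that $\sum_{d\in D}\|d\| \le L|D|$ in the worst case but is exactly what the informal statement treats as the dominant term; alternatively one simply accepts the stated proportionality as being up to these routine factors. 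Since the claim is phrased with "proportional to," I would present the counting at this level of precision and note that the bookkeeping details (sparse representation of $D$, and the empty-$D'_i$ steps contributing only $O(L\,l)$ in total) are routine.
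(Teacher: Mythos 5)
Your proposal is correct and follows the same route as the paper, which disposes of this claim with a single sentence ("direct estimation of the number of bit-operations"): one evaluation of $\lambda$ via Theorem~\ref{general-h-claim} for encoding/decoding gives $t=O(L\,l)$, and summing $|D'_i|$ over $i$ (using that the $D_i$ partition $D$) gives the stated bound on $T$. Your explicit accounting — including the honest caveats about sparse representation of $D$, the cost of selecting $v_i$, and the per-iteration overhead absorbed into the loose "proportional to" phrasing — is in fact more detailed than what the paper provides.
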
 
 
 \emph{Proof } is based on a direct estimation of the number of bit-operations.
 
Let us consider a simple 
 example illustrating the described method.
Suppose that a system should transmit 6-bit messages, but two consecutive letters may be distorted.  It means that the set of possible distortions is 
\begin{equation}\label{ex1}
D = \{ 000011, 000110, 001100, 011000, 110000 \} \, .
\end{equation}
(That is, any message $x_1 ... x_6$ may  change into $x\oplus d_i$ 
during the transmission, where $d_i$ is $i$-th word from $D$.)
Our goal is to build a code which can detect any distortion from $D$ that  occurs during the transmission.  For this, we first build a linear hash-function  $\lambda$ described in this part. 
According to (\ref{Di2}) we find that $D_1$ and $D'_1$  are empty sets and 
$$ D_2  = \{110000\}, \, D_3 =  \{011000\}, \, D_4 = \{001100\}, \,D_5=\{000110\}, \, D_6 =\{000011\} \, ,
$$
$$ D'_2  = \{100000\}, \, D'_3 =  \{010000\}, \, D'_4 = \{001000\}, \,D'_5=\{000100\}, \, D'_6 =\{000010\} \, .
$$
Clearly, $\max_i |D'_i| = 1$ and from (\ref{l-gen}) we obtain that 
$2^l - 1 \ge 1$ and, hence, it is enough to put $l = 1$. Recall that it means that there will be one check symbol and 5 information ones, and, besides, $\lambda$ will take values from 
$\{0,1\}$.

Now we can find  $\lambda$. According to (\ref{lyahat-gen})
we obtain $\lambda(e^6_1 )=  1$.  Then, based on (\ref{Di-gen}) we calculate
all the rest of the  values of $\lambda$ as follows:
$\lambda(e^6_2)$ should be chosen from the set $\{0,1\} \setminus \{1\} = \{0\}$.
So, $\lambda(e^6_2) = 0$. Analogously, 
 $\lambda(e^6_3) = 1$, $\lambda(e^6_4) = 0$,
$\lambda(e^6_5) = 1$, $\lambda(e^6_6) = 0$. Or, to put it shortly,
$\lambda(e^6_{even}) = 0$, $\lambda(e^6_{odd}) = 1$.
 From (\ref{ex1}), we can see  that $\lambda(d) = 1$ for any distortion $d$ $\in D$ 
and, hence, any  distortion from this set  is detected. 

Now we can finish the description of the code. We know that $l=1$ and, hence, the first message symbol $x_1$ is a check symbol,   while $x_2 ... x_6$ are information ones.
Suppose that information symbols are $11001$. The encoder forms the word 
$x^* = 011001$, calculates $\lambda(x^*) = 1$ and, hence, $x = 1 11001$. 
If no error  occurs, then $\lambda(x) = 0$ and the receiver obtains the information symbols
$11001$. If a distortion $d$ occurs (say, $d = 011000$), the receiver obtains the word
$y = x \oplus d = 100001$, calculates $ \lambda(100001) = 1$ and sees that the message  was corrupted during the transmission.

 \subsection{Error-correction.}
In this part we  describe an  algorithm for calculating a hash function $\lambda$ that gives a possibility to correct  any distortion $d \in D$, where $D$  is a given subset 
from $\{0,1\}^L$.  
We say that the system corrects distortions from $D$ if 
\begin{equation}\label{i1}
\lambda (x) = 00...0 \, \,  for  \, \,  any  \, \,  input  \, \,  message  \, \,  x  \, , \, \,\, 
all  \, \, \lambda(d),\, \,  d \in D  \, \,  are  \, \,  different  \, \, 
$$ $$ and  \, \, non-equal  \, \, to  \, \,  00...0  \, ,\, (hence, \, \,  \lambda (x\oplus d) = \lambda(d) \neq 00...0 \, )\, . 
\end{equation}
Note that this property  gives a possibility to find  $d$ and the original 
 message $x = y \oplus d$.

To describe the algorithm for constructing $ \lambda $, we will define some auxiliary variables. For any word $x_1 ... x_L$ and $1 \le i  \le L $  we  define $x|_1^i = x_1 x_2 ... x_i 00 ... 0$ and let 
\begin{equation}\label{GHF} 
D^+ = D \cup \{ e^L_0 \}, \, G_i = \{d|_1^i, d \in D^+ \},    H_i = G_i \setminus G_{i-1},  \, \, 
$$ $$
F_i = \{ All \, \, f \,\,for \,\,which \,\, \exists  g \in G_{i-1}, \exists h \in H_i  :
\, \, f = g \oplus h \oplus e^L_i   \} \,.
\end{equation}

\emph{ Input. }  A message length $L$ and a set of possible distortions $  D \subset ( \{0,1\}^L \setminus e^L_0 ) $.

\emph{  Output. }
An integer  $l$   such that
\begin{equation}\label{l-gen-corr}
2^l - 1 \ge  \max_{i = 1, ... , L} \, \,    \, |F_{i} | \, 
\end{equation}
 and a linear hash function $ \lambda: \{0,1 \}^L \to \{0,1 \}^l $, for which  
\begin{equation}\label{i1-lam}
all  \, \, \lambda(d),\, \,  d \in D,  \, \,  are  \, \,  different  \, \, 
 and  \, \, non-equal  \, \, to  \, \,  00...0  \, ,
\end{equation}
see (\ref{i1}).
 If $ l $ in (\ref {l-gen-corr}) is not defined or $ l \ge L $, the algorithm stops and answers that the solution does not exist.

\emph{ The algorithm.} 

\emph{ First step}.
Define 
\begin{equation}\label{E22} 
\lambda(e^L_{1} ) \, = \, e^l_1,  \, \,  \lambda(e^L_{2} ) \, = \, e^l_2,  \, ... ,  \, 
 \lambda(e^L_{l} ) \, = \, e^l_l \, .
\end{equation}

\emph{ Second step. } For $i = l+1,  l+2, ... , L$ define 
\begin{equation}\label{E2} 
\lambda(e^L_{i})  = v \, \, where \, \, v \, \,any \,\, word \, \, from \, \, 
  \{0,1\}^l    \setminus \,  \lambda(	F_i ) \, 
   . \end{equation}
Note that 
 $\lambda(e^L_j ) $, $j = 1, 2, ... , i-1$ are defined when $\lambda(F_i)$ is calculated,     
see (\ref{E22}) and (\ref{E2}). 

The key property of the  algorithm described is the following 
\begin{theorem}\label{ghf}
 \begin{equation}\label{unicity-gen}
All \, \, \lambda(u), \,  u \in  \lambda(D^+), \text{ are different.}
 \end{equation}
 \end{theorem}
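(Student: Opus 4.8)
The plan is to prove (\ref{unicity-gen}) by induction on the prefix length $i$, running from $i = l$ up to $i = L$; the inductive statement $P(i)$ will be that $\lambda$ restricted to $G_i$ is injective. Since every $d \in D^+$ equals its own truncation $d|_1^L$, we have $G_L = D^+ = D \cup \{e^L_0\}$, so $P(L)$ says exactly that $\lambda$ is injective on $D^+$; because $\lambda(e^L_0) = e^l_0$, this is equivalent to property (\ref{i1-lam}). For the base case $i = l$, any $g \in G_l$ has the form $g = d_1 \dots d_l 0 \dots 0$, so by linearity and (\ref{E22}) one computes $\lambda(g) = \bigoplus_{j=1}^{l} d_j\, e^l_j = d_1 \dots d_l$; hence $\lambda$ merely copies the first $l$ coordinates on $G_l$ and is injective there. (Throughout I assume $l < L$, since otherwise the algorithm reports that no solution exists.)

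For the inductive step I would assume $P(i-1)$ for some $i$ with $l+1 \le i \le L$, take distinct $g, g' \in G_i$, and prove $\lambda(g) \ne \lambda(g')$. First I would record two structural facts, to be verified directly from (\ref{GHF}): (a) if the $i$-th coordinate of $g \in G_i$ is $0$, then already $g \in G_{i-1}$; and (b) if the $i$-th coordinate of $g \in G_i$ is $1$, then $g \in H_i$ and the word $g^{-} := g \oplus e^L_i$ — which agrees with $g$ on the first $i-1$ coordinates and is $0$ beyond position $i-1$ — lies in $G_{i-1}$. Using these I would split into three cases by the $i$-th coordinates of $g$ and $g'$. If both are $0$, then $g, g' \in G_{i-1}$ and $P(i-1)$ gives the conclusion at once. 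If both are $1$, write $g = g^{-} \oplus e^L_i$ and $g' = g'^{-} \oplus e^L_i$ with $g^{-}, g'^{-} \in G_{i-1}$ and $g^{-} \ne g'^{-}$; then linearity gives $\lambda(g) \oplus \lambda(g') = \lambda(g^{-}) \oplus \lambda(g'^{-})$, which is nonzero by $P(i-1)$.

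The remaining case — exactly one of the two, say $g$, has $i$-th coordinate $1$ while $g'$ has $i$-th coordinate $0$ — is where the construction of $\lambda$ actually does its work. Here $g = g^{-} \oplus e^L_i$ with $g^{-} \in G_{i-1}$, and $g' \in G_{i-1}$. Suppose for contradiction $\lambda(g) = \lambda(g')$. Linearity then forces $\lambda(e^L_i) = \lambda(g^{-}) \oplus \lambda(g') = \lambda(g^{-} \oplus g')$. But $g^{-} \oplus g' = g' \oplus g \oplus e^L_i$ with $g' \in G_{i-1}$ and $g \in H_i$, so by the definition of $F_i$ in (\ref{GHF}) we get $g^{-} \oplus g' \in F_i$, hence $\lambda(g^{-} \oplus g') \in \lambda(F_i)$ — which contradicts the choice of $\lambda(e^L_i)$ in (\ref{E2}) as a word outside $\lambda(F_i)$. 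This closes the induction.

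I expect the main obstacle to be not the algebra — which is a two-line application of linearity together with $P(i-1)$ — but the combinatorial bookkeeping for $G_i$, $H_i$, $F_i$: establishing facts (a) and (b), and confirming that $F_i$ is precisely the set of differences $g^{-} \oplus g'$ that could produce a \emph{new} collision at step $i$ (including the case $g^{-} = g'$, i.e.\ $e^L_0 \in F_i$ whenever $H_i \ne \emptyset$), so that forbidding $\lambda(F_i)$ when picking $\lambda(e^L_i)$ genuinely rules out every such collision. I would also note, just as for Algorithm~1, that $\{0,1\}^l \setminus \lambda(F_i)$ is nonempty for each $i$ — which follows from (\ref{l-gen-corr}) and $|\lambda(F_i)| \le |F_i| \le \max_{j} |F_j| \le 2^l - 1$ — so that the second step of the algorithm is well-defined.
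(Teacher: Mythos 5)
Your proof is correct and follows essentially the same route as the paper's: induction on the truncation sets $G_i$ with the same three-case split (both words in $G_{i-1}$, both in $H_i$, one in each), the mixed case being resolved exactly as in the paper by observing that $g\oplus g'\oplus e^L_i\in F_i$ so that the choice of $\lambda(e^L_i)$ outside $\lambda(F_i)$ in (\ref{E2}) forbids the collision. Your additional remarks (the explicit facts (a)--(b), the base case computation on $G_l$, and the nonemptiness of $\{0,1\}^l\setminus\lambda(F_i)$ via (\ref{l-gen-corr})) only make explicit details the paper leaves implicit.
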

 {\it Proof.} 
We prove this by induction on $i$ for $G_i, i = 1, ... , L$,  where $G_L = D^+$. For $1, ... , l$ the property (\ref{unicity-gen}) follows from (\ref{E22}), because    $x_{{1}}  x_{{2}} ... x_{{l}}$ $= \lambda(
x|_1^{l})$
for any  $x = x_1 ... x_L$. Suppose that (\ref{unicity-gen}) is proven for $G_i$, and let us prove it for $G_{i+1}$. Let $u, v \in G_{i+1}$. We  need to show that $\lambda(u) \neq \lambda(v)$. 
There are the  following  three possibilities: 

i) $u, v \in G_i$. Then, $\lambda(u) \neq \lambda(v)$, because it is proven for $G_i$.

ii) $u, v \in G_{i+1} \setminus G_i \, \, ( = H_{i+1})$.  In this case
$u \oplus e^L_{i+1} \in G_i$ and $v \oplus e^L_{i+1} \in G_i$ (i.e. both belong to $G_i$)  and, hence,  $\lambda(u \oplus e^L_{i+1} ) \neq $ $\lambda(v \oplus e^L_{i+1})$.  So,
$\lambda(u ) \neq $ $\lambda(v )$.

iii) $u \in G_i, v \in H_{i+1}$.  In this case $\lambda (u) \oplus \lambda(v) $
$= \lambda(u \oplus v \oplus e^L_{i+1}) \oplus \lambda(e^L_{i+1}) .$
From the definition (\ref{GHF})  we can see that $u \oplus  v \oplus e^L_{i+1}$ belong to $F_{i+1}$. Taking into account (\ref{E2}), we can see  that 
$ \lambda( e^L_{i+1} ) $ $\neq \lambda( u \oplus  v \oplus e^L_{i+1}) $. Hence, $ \lambda(u \oplus v) \neq 00...0 , $ and $\lambda(u) \neq \lambda(v)$.

So, for i) - iii)  the inequality   $\lambda(u) \neq \lambda(u)$ is proven and the induction step is completed.  
 The claim (\ref{unicity-gen}) is proven.

Now we can describe the methods for encoding and decoding.
The encoding coincides with the method for error-detection. Namely, 
the positions $1, ... , l$ are  used for check symbols, while the other $L-l$ are used for information symbols. 
The encoder first puts the information symbols into positions $ \{  1+1, ... , L\}$ 
  and  0's into positions  $ 1, ... , l$.
Denote the obtained word $x^*$  and calculate $\lambda(x^*)$ $= w_1 ... w_l$.
Then put letters $w_1 ... w_l$ into the check positions $1, ... , l$  and denote the obtained word by $x = x_1 ... x_L$.  It should be clear that
$\lambda(x) = 00...0$. 
Indeed, 
$\lambda(x) = $ $ \lambda(x^*) $ $ \oplus \lambda(x\oplus x^*)$ 
$= w_1 ... w_l$ $\oplus ((e^l_{1}\times w_1)  \oplus  $  $  (e^l_{2} \times w_2) $ $ \oplus  $ $( e^l_{l}\times w_l) )$ $= w_1 ... w_l \oplus w_1 ... w_l$ $= 00 ... 0\, .$
(Here we used the definition (\ref{E22}).)
The decoder calculates $\lambda(  y)$ for the received (or stored) $y$.
If $\lambda(  y) = 00...0$, then no error  occurred, otherwise a distortion $d$ has occurred,
for which $\lambda(  d) $ $= \lambda(  y)$ (and, hence $y \oplus d$ is the original message).

From the property (\ref{unicity-gen}) we can see that the described method is correct.
 
Let us consider the complexity of the proposed method. 
   There are three important characteristics: the encoding  time ($t_{enc}$),  decoding 
one   ($t_{dec}$)  and the construction time of the  hash function in accordance with the described algorithm ($T$).
It is clear that $t_{enc}$ $= O(L \log L)$ and $t_{dec} = O(|D| L \log L)$.
Basing on (\ref{GHF}) and (\ref{E2}) we can obtain an estimate $T=$ $O(|D| \log L )^3$.

    Let us consider an example. Let the set of possible distortions  $D$  be (\ref{ex1}). Then, from (\ref{GHF}) we obtain
 $$
 D^+ = \{ 000000, 000011, 000110, 001100, 011000, 110000 \} \, ,
$$
$$ G_1 = \{   000000, 100000\},  G_2 = \{   000000, 010000, 110000\}, $$ $$ 
G_3=  \{   000000,  001000,    011000, 110000 \},  
 G_4 = \{   000000,  000100,  001100,  011000, 110000 \},   $$ $$ 
 G_5 = \{  000000,  000110,  011000,  001100,  110000,  000010 \}, G_6 = D^+,
  $$ 
  $$ H_1 = G_1 = \{   000000, 100000\}, H_2= \{   010000, 110000\},
  H_3 = \{     001000,    011000 \},
  $$
$$
H_4=  \{  000100,  001100  \}, H_5 = \{  000010,  000110 \}, H_6 = \{  000011 \},
$$
$$
F_1 = \{ 000000, 100000 \},  F_2 = \{ 000000, 100000 \}, F_3 = \{ 000000, 100000,
010000, 110000  \},
$$
$$
 F_4 = \{ 000000, 010000, 110000, 011000, 001000, 111000   \},
$$
$$
 F_5 = \{ 000000,  000100, 001100, 001000, 011000, 011100, 110000, 110100   \},
$$
$$
 F_6 = \{ 000000, 000010,  000100,  001110, 011010, 110010   \}.
$$
According to (\ref{l-gen-corr})  we find $\max_{i = 1, ... , L} \, \,  |F_{i} | $
$= |F_5| = 8 $  and, hence, $2^4 - 1 \ge 8$, $\, \, l=4 .$ 
From (\ref{E22}) we obtain
$$\lambda(e^6_1) = 1000, \,  \lambda(e^6_2) = 0100, \,
\lambda(e^6_3) = 0010, \,  \lambda(e^6_4) = 0001 \, . 
$$   
   Then, according to (\ref{E2}) we calculate 
   $$ \{0,1\}^4 \setminus \lambda (F_5) = \{0,1\}^4 \setminus 
   \{ 0000, 0001, 0011, 0010, 0110, 0111, 1100, 1101 \} = $$
   $$
  \{  0100, 0011,  1000, 1001, 1011, 1110, 1111 \}.
      $$
  Any of these words can be chosen as the value of $\lambda(e^6_5)$.    
  So,  let $\lambda(e^6_5)= 1111$. Analogously, 
  $$ \{0,1\}^4 \setminus \lambda (F_6) = \{0,1\}^4 \setminus 
   \{ 0000, 1111,  0001, 1100, 1001, 0011 \} = 
   $$ $$\{ 0010, 0011, 0100, 0101, 0110, 0111,
   1000, 1010, 1101, 1110 \} .
   $$
   Thus, we can define $\lambda(e^6_6)= 1000$.
   (We can check that all $\lambda(d), d \in D, $ are different:  $\lambda (000011) = 
  0111,  $ $\lambda (000110) = 
  1110,  $ $\lambda (001100) = 
  0011,  $ $\lambda (011000) = 
  0110,  $ $\lambda (110000) = 
  1100 .$)
   So, a linear hash function has been constructed, the number of information symbols is
   $L-l = 6-4=2$, the number of check symbols is $l=4$. 
   Suppose that the information symbols are 10. Then, according to the encoding method,
   $x^* = 000010, \lambda(x^*) = 1111$, $ x = 111110$. Suppose that the distortion 
   $011000$ has occurred. Then $y = 100110$, $\lambda(y) = 0110$. 
   Note that $\lambda(011000) = 0110$. Thus, $0110 = \lambda(y) = \lambda(011000) $. 
   It means that the decoder has found the distortion $d =  (011000)$ and can find
   $x = y \oplus d$ $= 100110 \oplus 011000 = 111110 = x$.
   So, the error is corrected.

\section{Conclusion}
In this paper we have shown how linear hash functions can be used for error detection and error correction.  It turns out, that it is possible to build error detection and correction  codes for any possible set of distortions. 

The case when the number of errors does not exceed a predetermined value is discussed in more detail. We consider a method whose performance is slightly better than the Varshamov - Gilbert   
 bound \cite{vp}. In addition, we propose a randomized algorithm, the performance of which is close to this bound, but the construction and encoding times are close to linear.

\section*{Acknowledgment}
This work was supported by Russian Foundation for Basic Research (grant 18-29-03005).

\end{document}